\definecolor{jade}{rgb}{0.0, 0.66, 0.42}
\definecolor{cerise}{HTML}{CE4760}
\colorlet{fg}{jade!75!black}
\colorlet{bg}{cerise!75!black}
\colorlet{hl}{yellow!75!black}
\definecolor{mg}{RGB}{233,116,81}
\theoremstyle{plain}  
\newtheorem{theorem}{Theorem}[section]
\newtheorem{lemma}[theorem]{Lemma}
\newtheorem{proposition}[theorem]{Proposition}
\newtheorem{corollary}[theorem]{Corollary}
\newtheorem{observation}[theorem]{Observation}
\newtheorem{conjecture}[theorem]{Conjecture}
\newtheorem{openquestion}[theorem]{Open Problem}
\newtheorem{claim}[theorem]{Claim}
\theoremstyle{definition}  
\newtheorem{definition}[theorem]{Definition}
\newtheorem{example}[theorem]{Example}
\theoremstyle{remark}
\newcommand{\alt}{\ensuremath{\mathsf{ALT}}\xspace}
\newcommand{\poly}{\ensuremath{\mathrm{poly}}\xspace}
\newcommand{\spn}{\ensuremath{\mathsf{span}}}
\newcommand{\ourclass}{\ensuremath{\mathsf{spanALP}}\xspace}
\title{From Alternation to FPRAS: Toward a Complexity Classification of Approximate Counting
}
\author{Markus Hecher}
\affiliation{
\institution{French Center for Scientific Research (CNRS), UMR 8188, University d'Artois}
\country{France}
}
\email{hecher@cril.fr}
\author{Matthias Lanzinger}
\affiliation{%
  \institution{Institute for Logic and Computation, TU Wien}
  \city{Vienna}
  \country{Austria}
}
\email{matthias.lanzinger@tuwien.ac.at}
\def\defn#1{\textcolor{gray}{$\blacktriangleright$}\textbf{\textit{\boldmath #1}}}
\let\epsilon=\varepsilon
\begin{abstract}
Counting problems are fundamental across mathematics and computer science. Among the most subtle are those whose
associated decision problem is solvable in polynomial time, yet whose exact counting version appears intractable. For
some such problems, however, one can still obtain efficient randomized approximation in the form of a fully polynomial
randomized approximation scheme (FPRAS). Existing proofs of FPRAS existence are often highly technical and
problem-specific, offering limited insight into a more systematic complexity-theoretic account of approximability.

In this work, we propose a machine-based framework for establishing the existence of an FPRAS beyond previous uniform criteria. Our starting point is alternating computation: we introduce a counting model obtained by equipping alternating Turing machines with a transducer-style output mechanism, and we use it to define a corresponding counting class $\ourclass$. We show that every problem in $\ourclass$ admits an FPRAS, yielding a reusable sufficient condition that can be applied via reductions to alternating logspace, polynomial-time computation with output. We situate $\ourclass$ in the counting complexity landscape as strictly between \#L and TotP (assuming RP $\neq$ NP) and observe  interesting conceptual and technical gaps in the current machinery counting complexity. Moreover, as an illustrative application, we obtain an FPRAS for counting answers to counting the answers Dyck-constrained path queries in edge-labeled
graphs, i.e., counting the number of distinct labelings realized by $s$--$t$ walks whose label sequence is well-formed with respect to a Dyck-like language. To our knowledge, no FPRAS was previously known for this setting. We expect the alternating-transducer characterization to provide a broadly applicable tool for establishing FPRAS
existence for further counting problems.
 \end{abstract} 
\begin{document}

\maketitle

\section{Introduction}

Counting is a fundamental task in mathematics~\cite{DoubiletRotaStanley1972}, computer science~\cite{DomshlakHoffmann07a,SangBeameKautz05a,ChaviraDarwiche08a}, as well as other natural sciences~\cite{goldberg2010complexity,dirks2003partition}.
	Motivated by the crucial role of counting, there are emerging annual competitions, e.g., for propositional model counting (\#SAT)~\cite{FichteHecherHamiti21a} and path counting~\cite{InoueEtAl23}, which have applications for microchip design. 
The computational complexity of foundational counting problems has been studied for a long time. In fact, it has been known since the late 70's that counting problems can be \emph{computationally hard despite easy corresponding decision problems}~\cite{Valiant79,Valiant79b}. To name a few of these canonical problems, consider \#2CNF (sometimes denoted \#2SAT), which aims at counting satisfying assignments of a formula in conjunctive normal form with at most $2$ literals per clause, and \#NFA, asking to compute the number of accepting words up to size $n$ of a non-deterministic finite automaton.
Conceptually, both problems might seem to have similar properties and complexities, as both are contained in \#P~\cite{Valiant79}. However, while \#2CNF can not be approximated~\cite{Zuckerman96,DyerEtAl03,sly2010computational} in polynomial time\footnote{The result follows from inapproximability of counting independent sets~\cite[Theorem 2]{sly2010computational}, since independent sets can be trivially counted via \#2CNF using one clause per edge, but \#2DNF can be efficiently approximated~\cite{ArenasEtAl21}.}, unless RP = NP, \#NFA is complete~\cite{ArenasEtAl21} for the complexity class \emph{spanL}~\cite{AlvarezJenner93} and therefore admits an efficient \emph{fully polynomial-time randomized approximation scheme (FPRAS)}. The spanL-completeness result~\cite{ArenasEtAl21} has been more than celebrated, as it allows for a more mathematically rigorous model (purely based on automata) without the need of hard-to-grasp duplicate elimination via ``span'' as a postprocessing step. Note that this postprocessing step might seem minor, but it is the \emph{crucial difference between hard-to-count and easy-to-count} problems. Indeed, \#L is the class of counting problems solvable by counting all accepting paths of a non-deterministic log-space Turing machine (which is polynomial-time computable), whereas spanL eliminates duplicates (before counting) among all labeled accepting paths.

Unfortunately, the \emph{barrier between approximability and inapproximability} has been blurred for a long time. 
This blur might have been strengthened when disregarding the fine difference between Turing reductions and Karp (many-one) reductions. While for decision problems, the difference is often not that severe, for counting there is a big gap, even between a many-one reduction and one with minor postprocessing. Indeed, while \#3CNF is \#P-hard under many-one reductions, \#2CNF is not, unless P=NP, as we could use such a many-one reduction to decide 3CNFs (NP-hard) in polynomial time via an efficient algorithm for deciding 2CNFs. This argument even extends to so-called $c$-monious or affine many-one reductions\footnote{\label{foot:affine}In this work, we consider \emph{affine polynomial-time reductions} which generalizes parsimonious reductions, as in affine reductions, the number $\#_t$ of solutions of the target problem must be $a\cdot \#_s + b$, where $\#_s$ is the number of solutions of the source problem and $a,b$ are constants. If $b=0$, the reduction is called \emph{$a$-monious} and $1$-monious means \emph{parsimonious}.}. Even further, as recently shown~\cite{2SATGap}, there is a big gap between \#2DNF (in spanL) and \#2DNF plus minor arithmetic postprocessing like subtraction (which contains \#P - \#P). 
As mentioned above, it is expected that not even \#2CNF (equivalent to ``$2^n$ - \#2DNF'') can be efficiently approximated; consequently, to characterize the thin barrier between problems admitting an FPRAS and those that are among the hardest in \#P, we \emph{must not use Turing reductions} and crucially rely on the more fine-grained many-one reductions.

In many cases, FPRASes are built from scratch using new approaches and smart on-the-fly tricks during crafting. There, sometimes a new FPRAS pops up, potentially improving~\cite{MeelEtal24} or extending~\cite{MeelColnet25} the class of problems that admit an efficient FPRAS. To explicitly name an interesting example, recently it has been established that \#CFGs admit an FPRAS~\cite{MeelColnet26}, which significantly extends the known \#NFA approximability result. 
However, since \#NFA is spanL-complete we know that any problem in spanL can be efficiently approximated. This tells us that \emph{complexity classes with the right machine model} can be very useful as they can enable powerful results and meta-statements, so probably a more complexity-theoretic characterization is in order to substantially widen our understanding of the class FPRAS without the need for hand-crafted approximation schemes. In recent counting complexity works, e.g., \cite{AchilleosCalki23}, the status quo has been phrased as: ``\emph{spanL is the only counting class so far defined in terms of TMs,
that [$\dots$] contains only approximable problems}''. We pickup this challenge and address the shortcomings as follows.

\smallskip
\noindent
\textbf{Contribution 1: A New Machine Model for FPRAS.}
In this paper, we identify a unifying character of problems that admit an FPRAS that allows us to extend the class of problems that can be efficiently approximated.
We define the class \ourclass of counting problems that can be polynomial-time reduced to counting the span (removes isomorphic output) over proof trees of alternating Turing machines that use up to log space and run in polynomial time. For this class, we define the novel concept of transducers for alternating Turing machines. Technical details are carried out in Section~\ref{sec:atm}. This machine model extends
the capabilities of spanL, as it adds universal branching to the machine model, while still preserving FPRAS existence.
It turns out that \ourclass contains many natural counting problems; therefore, we provide a
useful extended machine model for easier proofs of FPRASability. 

\smallskip
\noindent\textbf{Contribution 2: New Complexity-Class Inclusions and Separations.}
Indeed, we show how \ourclass naturally extends spanL as well as very recent FPRAS results for \#CFG~\cite{MeelColnet26},
as \#CFG is contained in this class.
Beyond that, the class \ourclass of problems is FPRASable and contains spanL\footnote{The inclusion is strict under affine log-space reductions, assuming NL $\neq$ LOGCFL, see also Figure~\ref{fig:cc} and Proposition~\ref{prop:strict}.}. Since all problems in \ourclass admit an FPRAS, unless NP = RP, \ourclass must be strictly contained in \#P under affine polynomial-time reductions$^{\ref{foot:affine}}$. 
The class \ourclass is actually strictly contained in the subclass TotP of \#P. Its relationship to the most important existing complexity classes is highlighted in Figure~\ref{fig:cc} and details are established in Section~\ref{sec:spanatm}.
We thereby increase the problems that admit an FPRAS via our machine model and insights from complexity theory. %

\begin{figure}[t]
    \centering
    \includegraphics[width=0.9\linewidth]{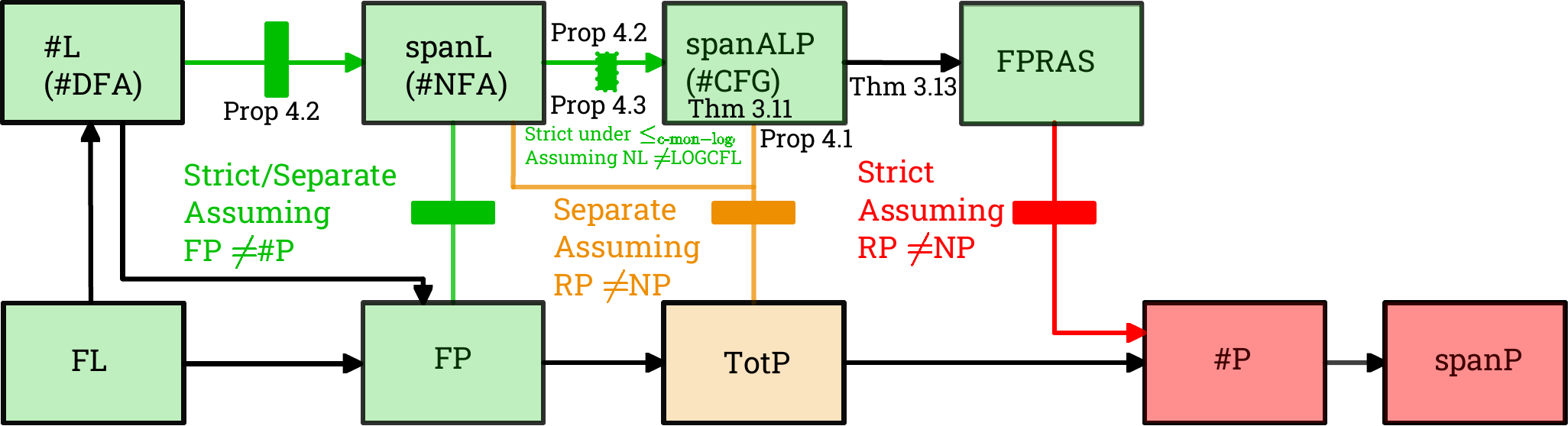}\vspace{-.65em}
    \caption{Counting complexity classes and their relation under affine polynomial-time reductions (except Proposition~\ref{prop:strict}, which uses $c$-monious log-space reductions). A directed arrow from $A$ to $B$ indicates that $A$ is included in $B$. Color coding indicates easiness based on FPRAS, however spanL still contains hard counting problems (under FP $\neq$ \#P): Green classes contain problems that can be efficiently approximated, orange classes are intermediate (easy decision problems) and red classes comprise very hard counting problems.  %
    }~\\[-1em]
    \label{fig:cc}
\end{figure}

\smallskip
\noindent\textbf{Contribution 3: A Novel Method for Obtaining an FPRAS.}
Indeed, \ourclass extends FPRAS existence beyond spanL.
Characterization via \ourclass therefore pushes the boundary of the existence of an FPRAS, enabling efficient approximations of new counting problems where an FPRAS was previously unknown. 
Among these, in Section~\ref{sec:paths} we highlight counting over Dyck-Constrained Path Queries, %
    where we only count well-formed Dyck-like walks among labeled walks in graphs.  
    This is an interesting demonstration of our class \ourclass, as counting such formatted walks
    seems to push the barrier of context-free languages.
    As an immediate consequence of our results, we immediately obtain an FPRAS for this problem. %

\smallskip
\noindent\textbf{Related Work.}
Counting problems are of reasonable interest in the database community, and there is a plethora of results in this direction. Research interests cover, for example, counting solutions to queries~\cite{ArenasEtAL22}, or, specifically, to conjunctive queries~\cite{GrecoScarcello14,ChenMengel15,ChenEtAl23}, or within conjunctive queries~\cite{KhalfiouiWijsen23}. As a corollary of our results, we provide an alternative proof of an FPRAS for counting solutions to acyclic conjunctive queries.

Recently, $[{\#2\text{CNF}}]^{\log}$, which is the class of problems that are log-space reducible to \#2CNF, has been studied~\cite{2SATGap}. Interestingly, under NL $\neq$ NP and log-space reductions, this class is strictly contained in \#P~\cite{2SATGap} and is interesting, as it is expected to be orthogonal to spanL (since \emph{\#2CNF is not expected to have an FPRAS, whereas spanL is in FPRAS}). Very recent results on \#CFG admitting an FPRAS~\cite{MeelColnet26} have been published and will be presented at SODA 2026. 
This \#CFG result motivates the search for a more general machine model. We develop an alternating transducer framework that encompasses spanL and goes beyond it, while still guaranteeing the existence of an FPRAS for every problem it captures.

There is also research on the \emph{gap of problems}. Given two \#3CNF formulas deciding their difference $[$\#3CNF - \#3CNF$]$ is complete for the class GapP \cite{FennerFortnowKurtz94}. Recent work has shown that this classification holds even if both formulas are (restricted versions of) monotone formulas in \#2CNF or even \#2DNF \cite{2SATGap}. Problem $[$\#PerfectMatching - \#PerfectMatching$]$ has been studied as well \cite{bakali2024power}, with various decision versions complete for classes such as C$_{=}$P, PP, and WPP. An exponential lower bound based on rETH was also shown. 

Closure properties of~\#P and other counting complexity classes have been extensively studied, see, e.g.,~\cite{FortnowReingold96,HoangThierauf05,OgiharaEtAl96,ThieraufEtAl94}.
There are also interesting findings on the closure of~PP under intersection~\cite{BeigelEtAl95}, which uses closure properties of \#P.

\smallskip
\noindent\textbf{Structure.}
Necessary technical preliminaries, an overview over relevant counting complexity classes, and alternating Turing machines are given in \Cref{sec:prelims}. We introduce our model of alternating transducers in \Cref{sec:atm} and discuss key complexity theoretic relationships in \Cref{sec:complexity}. We show an application to a new query language for well-formed walks in \Cref{sec:dyck}. 

\section{Preliminaries}
\label{sec:prelims}

Given a \emph{non-deterministic Turing machine} $M$, let $acc_M(x)$ be the number of accepting paths of $M$ starting from input $x$. Let $rej_M(x)$ be the same but for rejecting paths. Then define $tot_M(x) = acc_M(x) + rej_M(x)$. Note that we require every path to either accept or reject so $tot_M(x)$ is the total number of paths of the machine. Let $time(M)$ and $space(M)$ be the worst-case time and space of $M$ across all branches. If not stated otherwise, $M$ is a polynomial-time machine. We define the usual counting complexity classes.

    \defn{FL:} Counting problems that are log-space reducible to $acc_M(x)$ on deterministic log-space~$M$. 
    
    \defn{FP:} Counting problems, polynomial-time reducible to $acc_M(x)$ on deterministic poly-time~$M$. 
    
    \defn{\#L:} Counting problems that are log-space reducible to $acc_M(x)$ on log-space~$M$. 

    \defn{\#P:} Counting problems that are polynomial-time reducible to $acc_M(x)$. 

    \defn{TotP:} Counting problems that are polynomial-time reducible to $tot_M(x)$.

    \defn{Self-Reducibility:} A function $f: \Sigma^* \rightarrow \mathbb{N}$ is called polynomial-time \textbf{self-reducible} if there exist polynomials $r$ and $q$, and polynomial-time computable functions $h:\Sigma^* \times \mathbb{N} \rightarrow \Sigma^*$, $g:\Sigma^* \times \mathbb{N} \rightarrow \mathbb{N}$, and $t:\Sigma^* \rightarrow \mathbb{N}$ such that for all $x \in \Sigma^*$ the following holds (as in \cite{pagourtzis2001complexity}).
    \begin{enumerate}[(a)]
        \item $f$ can be processed recursively by reducing x to a polynomial number of instances $h(x,i)$ for $0 \leq i \leq r(|x|)$ \,i.e. formally $f(x) = t(x) + \sum_{i = 0}^{r(|x|)}g(x,i)f(h(x,i))$.
        \item the recursion terminates after polynomial depth and every instance invoked in the recursion is of poly($|x|$) size (that is, the value $f$ on instances $h(\cdots h(h(x,i_1),i_2)\cdots,i_{q(|x|)})$ can be computed deterministically in polynomial time). 
    \end{enumerate}

\noindent
A \emph{Transducer machine}  $M$ is a type of non-deterministic Turing machine that has a write-only write-once output tape. Then, each accepting path of $M$ can be assigned to its corresponding content on the output tape, but the outputs for different paths could be identical. By $span_M(x)$ we refer to the number of different outputs among all accepting paths of $M$ on input $x$.
This concept induces the following complexity classes.

 \defn{spanL:}  Counting problems that are log-space reducible to $span_M(x)$ on log-space $M$.

\defn{spanP:} Counting problems that are polynomial-time reducible to $span_M(x)$ on poly-time~$M$.

\noindent
An overview of these classes and their relationship is visualized in Figure~\ref{fig:cc}.
If there are conditional separations or conditional strict inclusions between classes $A$, $B$,
we sometimes abbreviate this by $A \not=_{c} B$ or by $A\subsetneq_{c} B$ for some condition $c$, respectively.

\begin{definition}[FPRAS]\label{def:fpras}
Let \( f : \{0,1\}^* \to \mathbb{R}_{\ge 0} \) be a function. A randomized algorithm 
\( A(x, \varepsilon, \delta) \) is called a \emph{Fully Polynomial Randomized Approximation Scheme (FPRAS)} 
for \( f \) if for every input \( x \in \{0,1\}^* \), every error parameter 
\( \varepsilon > 0 \), and every failure probability \( \delta > 0 \), the algorithm outputs a random variable 
\( Y \) such that
\(
\Pr\!\left[ (1 - \varepsilon) f(x) \le Y \le (1 + \varepsilon) f(x) \right] \ge 1 - \delta,
\)
and the running time of \( A \) is bounded by a polynomial in 
\( |x| \), \( 1/\varepsilon \), and \( \log(1/\delta) \); that is,
\(
\text{Time}(A) = \operatorname{poly}\!\left( |x|, \frac{1}{\varepsilon}, \log\frac{1}{\delta} \right).
\)
Weakening this definition to a quasipolynomial bound in $|x|$, defines the analogous notion of a \emph{Fully Quasipolynomial Randomized Approximation Scheme (QPRAS)}.
\end{definition}

By slight abuse of notation, we also use FPRAS or QPRAS as the \emph{class of problems} that admit an FPRAS or QPRAS, respectively.

\subsection{Counting-Sensitive Reductions}
The inclusions between classes of Figure~\ref{fig:cc} already demonstrate why we need a stricter reduction model. Indeed, under Turing reductions all of these classes are identical.
In this paper, we are mostly concerned with \emph{affine polynomial-time reductions} from a problem $A$ to a problem $B$. These are polynomial-time reductions~$\pi: A \rightarrow B$ that map any instance $I_A$ of $A$ to an instance of $B$ such that $\#(\pi(I_A)) = c\#(I_A) + d$ for some constants $c,d$, where $\#(I)$ is the number of solutions of $I$. If $d=0$, the reduction $\pi$ is \emph{$c$-monious} and a $1$-monious reduction is called \emph{parsimonious}. The existence of an affine polynomial-time reduction $\pi$ is denoted by $A\leq_{\mathsf{aff}\text{-}P} B$, $c$-monious polynomial-time reducibility is given by $A\leq_{c{-}\mathsf{mon}\text{-}P} B$, and if $c=1$ (parsimonious) we write $A\leq_{\mathsf{pars}\text{-}P} B$. The affine polynomial-time closure of a class $\mathcal{C}$ of problems is denoted by $[\mathcal{C}]^{\mathsf{aff}\text{-}P} = \{B \mid A\in \mathcal{C}, A \leq_{\mathsf{aff}\text{-}P} B\}$ and the closure for a problem $[{A}]^{\mathsf{aff}\text{-}P} = [\{A\}]^{\mathsf{aff}\text{-}P}$. Similarly, we denote
affine log-space reducibility ($A \leq_{\mathsf{aff}\text{-}\log} B$) and the affine log-space closure ($[\mathcal{C}]^{\mathsf{aff}\text{-}\log}$).
Unless otherwise mentioned, we assume the usage of affine polynomial-time reducibility ($\leq_{\mathsf{aff}\text{-}P}$).

A standard log-space or polynomial-time decision reducibility from $A$ to $B$ is denoted by $A\leq_{\log} B$ or $A\leq_P B$, respectively.

\subsection{Context-Free Grammars}

A \emph{context-free grammar} (CFG) is a tuple $G = (N, \Sigma, R, S)$, where $N$ is a finite set of nonterminal symbols, $\Sigma$ is a finite set of terminal symbols with $N \cap \Sigma = \emptyset$, $R \subseteq N \times (N \cup \Sigma)^{*}$ is a finite set of production rules, and $S \in N$ is the start symbol.
A \emph{derivation tree}  for $G$ is a finite rooted ordered tree $T$ whose nodes are labeled by symbols from $N \cup \Sigma \cup \{\varepsilon\}$ such that:
(i) the root is labeled by $S$, and
(ii) for every internal node $u$ labeled by a nonterminal $A\in N$, if the children of $u$ (from left to right) are labeled
$X_1,\ldots,X_m$, then $A \to X_1\cdots X_m$ is a rule in $R$ (where $m=0$ corresponds to a rule $A\to\varepsilon$).
The \emph{yield} of $T$, denoted $\operatorname{yield}(T)$, is the word obtained by reading the leaf labels of $T$ from left to right and
concatenating all symbols in $\Sigma$ (ignoring $\varepsilon$).
A word $w\in\Sigma^*$ is \emph{generated} by $G$ if there exists a derivation tree $T$ of $G$ with $\operatorname{yield}(T)=w$. The \emph{language generated by $G$} is $L(G) := \{\, \operatorname{yield}(T) \mid T \text{ is a derivation tree of } G \,\}$.
A context-free grammar $G$ is \emph{unambiguous} if every word $w \in L(G)$ has exactly one derivation tree in $G$.

A CFG is in \emph{Chomsky normal form} (CNF) if every rule in $R$ is of one of the forms $A \to BC$ with $B,C \in N$, or $A \to a$ with $a \in \Sigma$, and additionally $S \to \epsilon$ is allowed only if $\epsilon \in L(G)$ and $S$ does not appear on the right-hand side of any rule. Every CFG $G$ has an equivalent CFG $G'$ in CNF such that $L(G') {=} L(G)$ (up to presence of $\epsilon$), and $G'$ can be constructed in $O(\|G\|^{2})$ time~\cite{HopcroftUllman79}.

For $n \in \mathbb{N}$, let $L_n(G)$ denote the set of those words in $L(G)$ that have length exactly $n$. The typical counting problem for CFGs takes as input a CFG $G$, and an $n$ in unary and asks for $|L_n(G)|$. We denote this problem by \#CFG.
Recently, it has been shown that this problem admits an FPRAS.
\begin{proposition}[Theorem 1 in \cite{MeelColnet26}]\label{prop:fpras}
    \#CFG for unary size bound $n$ admits an FPRAS.
\end{proposition}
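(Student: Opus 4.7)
The plan is to attack $\#\text{CFG}$ via randomized sampling of derivation trees combined with an importance-sampling correction for grammar ambiguity, in the spirit of Karp--Luby and the Jerrum--Valiant--Vazirani FPAUS/FPRAS framework. First I would convert $G$ to Chomsky normal form in polynomial time; a derivation tree with yield of length $n$ is then a binary tree with exactly $n$ leaves and $n-1$ binary internal nodes, which is a very rigid combinatorial object. A standard CYK-style bottom-up dynamic program then computes, in polynomial time, the total number $T_n(G)$ of derivation trees of yield length $n$ (counted with multiplicity), and the same table supports a top-down uniform sampler of such trees: at each nonterminal covering a span of length $k$, sample a rule $A \to BC$ and a split point $k_1 + k_2 = k$ with probability proportional to the product of the stored counts for $B$ over length $k_1$ and $C$ over length $k_2$.

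Given this sampler, the natural unbiased estimator for $|L_n(G)|$ is $X := T_n(G)/d(w)$, where $w$ is the yield of the sampled tree and $d(w)$ is the number of derivation trees of $w$; crucially, $d(w)$ is computable in polynomial time by running CYK directly on the single string $w$. Since uniform tree sampling gives $w$ with probability $d(w)/T_n(G)$, we indeed have $\mathbb{E}[X] = \sum_{w \in L_n(G)} \bigl(d(w)/T_n(G)\bigr)\cdot\bigl(T_n(G)/d(w)\bigr) = |L_n(G)|$. If the relative second moment $\mathbb{E}[X^2]/\mathbb{E}[X]^2$ were polynomially bounded, then averaging $\operatorname{poly}(n,1/\varepsilon,\log(1/\delta))$ independent copies of $X$ and applying Chernoff would already yield the desired FPRAS, via the usual polynomial self-reducibility of $\#\text{CFG}$ in $n$ (obtained by a Brzozowski-style left-derivative $G \mapsto G_a$ with $|L_n(G)| = \sum_{a \in \Sigma} |L_{n-1}(G_a)|$).

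The main obstacle is exactly this variance bound: grammatical ambiguity makes $d(w)$ range over exponentially many scales across $w \in L_n(G)$, so the naive estimator can have exponential relative variance when highly ambiguous and uniquely derivable words coexist. My plan to handle this is a logarithmic stratification by ambiguity: partition $L_n(G)$ into $O(n \log|G|)$ buckets $B_k = \{w : 2^k \le d(w) < 2^{k+1}\}$ and estimate $|B_k|$ inside each bucket, where the ratio of multiplicities is bounded by $2$ and Chernoff concentration kicks in with polynomially many samples. The hard technical step, which is where I expect to follow the Meel--Colnet construction, is implementing an efficient conditional sampler for each bucket: either by an acceptance-rejection scheme on uniform tree samples whose acceptance probability is non-negligible per bucket, or by augmenting the CYK DP with an extra coordinate tracking $\lfloor \log d \rfloor$ so that trees with yields in a prescribed ambiguity range can be sampled directly in polynomial time. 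Once such a per-bucket sampler is in hand, summing the stratified estimates and invoking self-reducibility closes out the FPRAS.
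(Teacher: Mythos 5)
First, note that the paper does not prove this statement at all: Proposition~\ref{prop:fpras} is imported verbatim as Theorem~1 of the cited Meel--Colnet work and used as a black box, so there is no in-paper proof to compare against. Judged on its own terms, your proposal is a plan rather than a proof, and it has a genuine gap exactly at the step you yourself flag as ``the hard technical step.'' The preliminary parts are fine: CNF conversion, the CYK count $T_n(G)$ of derivation trees with multiplicity, the uniform tree sampler, and the unbiasedness of $X=T_n(G)/d(w)$ are all correct, and your diagnosis that the relative second moment $\mathbb{E}[X^2]/\mathbb{E}[X]^2 = T_n(G)\sum_{w\in L_n(G)} d(w)^{-1}/|L_n(G)|^2$ can be exponential is also correct. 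But neither of your two proposed ways to realize the per-bucket sampler works. Augmenting the CYK table with a coordinate tracking $\lfloor \log d\rfloor$ is not possible, because $d(w)$ is a property of the \emph{entire yield} $w$, not of the subderivation being built: the ambiguity of the final word is determined only by the completed string, so $\lfloor\log d(w)\rfloor$ does not decompose along rules $A\to BC$ and split points, and cannot be carried as a polynomial-size DP state. Acceptance--rejection on uniform tree samples also fails: the bucket of (say) uniquely derivable words can receive an exponentially small fraction of the tree mass whenever highly ambiguous words dominate $T_n(G)$, which is precisely the regime in which stratification is needed, so the per-bucket acceptance probability is not non-negligible.

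This gap is not incidental; it is the entire difficulty of the problem, and it is why approximating \#CFG was open until recently. The actual Meel--Colnet argument (like the \#NFA FPRAS of Arenas et al.~\cite{ArenasEtAl21} that it extends~\cite{MeelColnet26}) does not go through importance sampling over derivation trees at all; it maintains, bottom-up through the dynamic program, polynomial-size sketches (unions together with random samples) of the word sets derivable from each nonterminal at each length, and controls the error incurred when these set-union estimates are combined across rules and split points. So your route is genuinely different from the known proof, and in its current form it does not close. A smaller point: the Brzozowski-derivative self-reducibility you invoke is not obviously polynomial when iterated $n$ times, since each left-quotient construction can enlarge the grammar; but this is moot, because your direct estimator, if it worked, would not need self-reducibility at all.
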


From this result, we immediately obtain the following FPRAS result.

\begin{corollary}\label{lem:cfgfpras}
    Counting the number of accepted words of a context-free grammar up to unary size bound $n$, denoted \#CFG$_{\leq}$,
    admits an FPRAS.
\end{corollary}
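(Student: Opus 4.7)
The plan is to reduce $\mathrm{\#CFG}_{\le}$ to $\mathrm{\#CFG}$ by a padding construction and then invoke Proposition~\ref{prop:fpras}. Given input $(G,n)$ with $G = (N,\Sigma,R,S)$ and $n$ in unary, I would fix a fresh padding symbol $\# \notin \Sigma$ and build a new grammar $G' = (N\cup\{S',P\},\,\Sigma\cup\{\#\},\,R',\,S')$ whose rules are $R$ together with $S' \to S\,P$, $P \to \#\,P$, and $P \to \varepsilon$. This construction is clearly possible in log space (hence polynomial time) and blows up the description only by a constant.

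The key observation is a bijection
\[
\Phi:\ \bigcup_{k=0}^{n} L_k(G)\ \longrightarrow\ L_n(G'),\qquad \Phi(w) = w\,\#^{\,n-|w|}.
\]
Injectivity is immediate because $\#\notin\Sigma$, so stripping all trailing $\#$'s uniquely recovers $w\in L(G)$ from its image. Surjectivity onto $L_n(G')$ follows from the shape of $G'$: every derivation from $S'$ first expands $S'\to SP$ into some $w\in L(G)$ followed by a string of $\#$'s, and requiring the total length to be exactly $n$ forces the $\#$-block to have length $n-|w|$, which in turn forces $|w|\le n$. Hence $|L_n(G')| = \sum_{k=0}^{n}|L_k(G)| = \mathrm{\#CFG}_{\le}(G,n)$.

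Finally I would run the FPRAS of Proposition~\ref{prop:fpras} on input $(G',n)$. Since $|G'| = |G|+O(1)$ and $n$ is passed through unchanged in unary, the overall running time remains polynomial in $|G|$, $n$, $1/\varepsilon$, and $\log(1/\delta)$, and the approximation guarantee transfers verbatim because the two quantities are equal, not merely proportional.

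There is no real obstacle here: the only thing to be careful about is checking that the padding is injective (which requires $\#$ to be a genuinely fresh symbol) and that the bijection is exact, so that the multiplicative FPRAS guarantee on $|L_n(G')|$ translates directly into the same guarantee on $\mathrm{\#CFG}_{\le}$ without any additive error. Both are handled by the construction above, so the corollary follows immediately from Proposition~\ref{prop:fpras}.
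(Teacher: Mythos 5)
Your proof is correct, but it takes a genuinely different route from the paper. The paper runs the \#CFG FPRAS separately for each length $j \in \{0,\dots,n\}$ to get estimates $Y_j$ and then sums them, arguing that the relative-error guarantees combine under the maximum of the $\varepsilon_j$ and the failure probabilities combine as well; as written, that last step really needs a union bound (one should set each $\delta_j = \delta/(n+1)$ so that the total failure probability is at most $\delta$, rather than bounding it by $\max_j \delta_j$), which costs only a $\log(n+1)$ factor in the running time. Your padding construction sidesteps all of this: the grammar $G'$ with rules $S' \to SP$, $P \to \#P$, $P \to \varepsilon$ gives a parsimonious reduction, since the sets $L_k(G)$ for distinct $k$ are disjoint and your map $\Phi(w) = w\#^{\,n-|w|}$ is a bijection onto $L_n(G')$ (injectivity because $\# \notin \Sigma$, surjectivity from the shape of the derivations). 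A single invocation of Proposition~\ref{prop:fpras} on $(G',n)$ then transfers the multiplicative guarantee exactly, with no accumulation of error or failure probability across $n+1$ runs. The one hypothesis you are implicitly using is that the FPRAS of Proposition~\ref{prop:fpras} accepts arbitrary CFGs (your $G'$ is not in Chomsky normal form because of $P \to \varepsilon$), which holds as the paper states \#CFG for general grammars; even if normalization were required, it is a polynomial-time preprocessing step that preserves $L_n$. Both approaches are valid; yours is cleaner on the probabilistic bookkeeping, while the paper's is more self-contained in that it never modifies the grammar.
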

\begin{proof}

By Proposition~\ref{prop:fpras}, there is an FPRAS approximating \#CFG for given unary size bound $n$ via $Y_n$. 
Then, according to Definition~\ref{def:fpras},
\qquad\(
\Pr\!\left[ (1 - \varepsilon) f_n(x) \le Y_n \le (1 + \varepsilon) f_n(x) \right] \ge 1 - \delta,
\)

which immediately gives us the following probability bound for every $Y_i$ with $0\leq i \leq n$:
\(
\Pr\!\left[ (1 - \max_{0\leq j\leq n}\varepsilon_j) f_i(x) \le Y_i \le (1 + \max_{0\leq j\leq n}\varepsilon_j) f_i(x) \right] \ge 1 - \max_{0\leq j\leq n}\delta_j. \quad 
\)
Hence, we obtain: 
\(
\Pr\!\left[ (1 - \max_{0\leq j\leq n}\varepsilon_j) \Sigma_{0\leq j\leq n} f_j(x) \le \Sigma_{0\leq j\leq n} Y_j \le (1 + \max_{0\leq j\leq n}\varepsilon_j) \Sigma_{0\leq j\leq n} f_j(x) \right] \ge 1 - \max_{0\leq j\leq n}\delta_j.
\)

As a result, we can sum up the individual FPRAS values $Y_j$ for every $0\leq j\leq n$ that are obtained via Proposition~\ref{prop:fpras}, to obtain a combined FPRAS for \#CFG$_{\leq}$.
\end{proof}

\subsection{Alternating Turing Machines \& The Complexity Class LOGCFL}\label{sec:logcfl}
An \emph{alternating Turing machine} (ATM)~\cite{chandra1981alternation} is a generalization of a non-deterministic Turing machine that includes two types of states: \emph{existential} and \emph{universal}. In addition, as usual, there are also designated accepting and rejecting states.
We can view computations of ATMs as trees of configurations. An \emph{accepting computation tree} of an ATM $M$ on input string $x$ is a tree, where the root is the initial configuration of $M$ with input $x$, the descendants of universal configurations $C$ are all successor configurations of $C$, the descendants of existential configuration $C'$ is some successor configuration of $C'$, and all leaves are accepting configurations.
We say that a language $\mathcal{L}$ is accepted by an ATM  $M$ \emph{with tree-size $Z(\cdot)$} if for every string $x \in \mathcal{L}$ there is at least one accepting computation tree of $M$ on $x$ with at most $Z(|x|)$ nodes (and for $x\not \in \mathcal{L}$ there is no accepting computation tree).

The rationale for introducing tree-size bounds is that this allows us to reach an intermediate between non-determinism and full alternation. For example, a log-space ATM with polynomial tree-size bound $Z$ intuitively induces a complexity class between $\mathsf{NL}$ and an logarithmic space ATM without any explicit tree-size bound.
We write $\alt(S(n), Z(n))$ to denote the class of languages accepted by an ATM with tree-size $Z(n)$ and using $S(n)$ space, where $n$ is the length of the input.

The technical development in this paper will focus on counting via ATMs. One of the interesting properties of $\alt(\log n, \poly n)$ is that is admits various natural characterizations with alternative models of computation. We briefly introduce them here as they are relevant for later discussion. The most prominent equivalent class is $\mathsf{LOGCFL}$, the class of all problems that are log-space reducible to the membership problem in a context free language~\cite{cook1971characterizations}.
Moreover, an \emph{auxiliary pushdown automaton} (AuxPDA) is a TM with an additional pushdown stack that does count towards the space bounds on the work tape. Time is measured as usual~\cite{cook1971characterizations}.  Let $\mathsf{NAuxPDA}(S(n),Z(n))$ be the languages decidable by a nondeterministic AuxPDA in work space $O(S(n))$ and time $O(Z(n))$. 
Finally, $\mathsf{SAC}^1$ is the class of problems solvable by $\mathsf{AC}^0$-uniform, polynomial-size circuits of depth $O(\log n)$ in
which $\vee$-gates have unbounded fan-in and $\wedge$-gates have fan-in $2$ (with negations only at inputs), see, e.g.,~\cite{Immerman99}.
It is then known~\cite{sudborough1978tape,ruzzo1979tree,cook1971characterizations,venkateswaran1987properties} that:
\[
\mathsf{LOGCFL}=\mathsf{ALT}(O(\log n),\poly(n))=\mathsf{NAuxPDA}(O(\log n),\poly(n)) = \mathsf{SAC}^1.
\]

To avoid notational clutter, below we will write $\log,\poly$ instead of $O(\log n),\poly(n)$ for the tree size and space parameters of ATM complexity classes.

\section{Counting via Alternating Turing Machines}\label{sec:atm}

Let $M$ be an ATM. Recall that acceptance of $M$ on input string $x$ is defined via the existence of a proof tree. Proof trees depend on non-deterministic choices in the existential states, hence there can be multiple accepting proof trees. Since we care about proof tree size, we have to consider only proof trees adhering to bounds (recall that even if the decision problem can be solved with $Z(n)$ proof trees, it does not mean that all proof trees in the ATM are of that size, only that at least one accepting on is). For $Z : \mathbb{N} \to \mathbb{N}$ define $\mathsf{AT}_{Z}(M,x)$ be the set of all accepting computation trees of $M$ for input $x$ of size at most $Z(|x|)$ and using at most $|x|$ space. Let $\#_Z(M,x) = |\mathsf{AT}_Z(M,x)|$ be the number of accepting computation trees. This leads to a natural ATM-based counting complexity~class.

\begin{definition}
    Let $\#\alt(S(n), Z(n))$ be the class of counting problems $f : \Sigma^* \to \mathbb{N}$ such that there is an ATM $M$ using at most $S(n)$ space and where $f(x) = \#_Z(M,x)$ on all $x\in \Sigma^*$.
\end{definition}

This gives us a natural counting class that corresponds cleanly to LOGCFL, namely the class $\#\alt(\log n, \poly(n))$.
While this definition of \#\alt is straightforward, it is insufficient to capture more interesting problems, cf. \#L vs.\ \spn{}L.
To define a class that is strong enough, we therefore introduce machinery that allows us to reason about duplicate witnesses in ATMs.  

\subsection{Alternating Transducer Machines and \spn{}\alt}

We now extend alternating Turing machines to \emph{output-producing} devices, in order to define a natural notion of \emph{span} counting under alternation.

\begin{definition}[Alternating Transducer Machine]
An \emph{Alternating Transducer Machine (ATrM)} is an ATM
\(
M = (Q, \Sigma, \Gamma, q_0, \delta, t, \prec)
\)
augmented with a write-only, write-once output tape, such that:
\noindent\begin{inparaenum}%
    \item $Q$ is a finite set of states;
    \item $\Sigma$ is the input alphabet;
    \item $\Gamma$ is the work-tape alphabet (contains $\Sigma$);
    \item $q_0 \in Q$ is the initial state;
    \item $t : Q \to \{\exists, \forall, \mathsf{accept}, \mathsf{reject}\}$ labels each state by its type; and\\
    \item $\delta$ is a \emph{transition function}
    \(
        \delta : Q \times \Gamma \longrightarrow
        \mathcal{P}\bigl(Q \times \Gamma \times \{L,R\} \times \Sigma_\circ\bigr),
    \)
    where $\Sigma_\circ := \Sigma \cup \{\circ\}$ and $\circ$ denotes the ``no-output'' symbol.
\end{inparaenum}
A transition $(q',\gamma',D,\omega)$ in $\delta(q,\gamma)$ has the usual meaning for the control, work tape, and head movement. %
Moreover, $\prec$ is a strict linear order on $Q \times \Gamma \times \Sigma_\circ$, which serves to order the children of universal states.
\end{definition}

Through the output symbols in the transition function we will then define the output of an accepting computation tree. \emph{Some care is required here}, as the option to not create any output on universal states can create situations where it is unclear what a natural interpretation of the output would be. Consider the following two accepting computation trees, with configurations labeled by the output symbol of the transition that moves into the configuration.
\begin{figure}[H]~\\[-1em]
\centering
\begin{subfigure}[b]{0.2\linewidth}
\centering
\begin{tikzpicture}[baseline=(root.base),
  level distance=4mm,
  sibling distance=10mm,
  every node/.style={inner sep=1pt}]
\node (root) {$\circ$}
  child { node {$a$}
    child { node {$\circ$}
      child { node {$c$} }
      child { node {$d$} } }
    child { node {$b$} } }
  child { node {$e$}};
\end{tikzpicture}
\caption{}
\end{subfigure}
\hspace{1cm}
\begin{subfigure}[b]{0.2\linewidth}
\centering
\begin{tikzpicture}[baseline=(root.base),
  level distance=4mm,
  level 1/.style={sibling distance=12mm},
  level 2/.style={sibling distance=10mm},
  every node/.style={inner sep=1pt}]
\node (root) {$\circ$}
  child { node {$a$}
    child { node {$b$} }
    child { node {$c$} } };
\end{tikzpicture}
\caption{}
\end{subfigure}
\hspace{1cm}
\begin{subfigure}[b]{0.2\linewidth}
\centering
\begin{tikzpicture}[baseline=(root.base),
  level distance=4mm,
  level 1/.style={sibling distance=12mm},
  level 2/.style={sibling distance=10mm},
  every node/.style={inner sep=1pt}]
\node (root) {$\circ$}
  child { node {$a$}
    child { node {$b$} } }
  child { node {$a$}
    child { node {$c$} } };
\end{tikzpicture}
\caption{}
\end{subfigure}~\\[-1em]
\end{figure}
Tree (a) illustrates why defining the output as a \emph{single labeled tree is delicate}, once we allow $\circ$-transitions. If one simply deletes $\circ$-labeled nodes, then the root disappears and the remaining labeled nodes form a \emph{forest} (here, an $a$-component and an $e$-leaf), rather than a rooted tree. Moreover, deleting an internal $\circ$-node forces an arbitrary choice between keeping an unlabeled branching node or flattening its children into the nearest labeled ancestor. On the other hand, keeping $\circ$ nodes is problematic as it would require precise analysis of the number of computation steps (since they all create a node) between outputs to argue whether two outputs are identical.

Trees (b) and (c) illustrate a problem with another natural approach: writing to an output tape. The natural version of writing to an output tape would 
create an individual copy of the tape for each child in a universal state. It is then \emph{tempting to consider the states of the output tapes at the leaves} of the computation trees as the output of the tree. However, in (c), both the left and right subtree look identical at their leaves. Both contain the string $ab$ and $ac$, respectively. Yet, it seems desirable to consider computation trees (b) and (c) as different.

\smallskip
We resolve these issues with the following definition of output for ATrMs.

\begin{definition}[ATrM Output]
We define an \emph{accepting computation tree} of an ATrM analogous to accepting computation trees of ATMs. Additionally, we add the label $\sigma(m)$ for each node $m$ in the tree, which maps the node to the output symbol of the transition that reached the corresponding configuration. For the root node $r$, $\sigma(r)$ maps to $\circ$.
The \emph{output} of an accepting computation tree $T$, written
$\mathsf{Out}(T)$, is defined by collecting the outputs into a tuple of trees as specified in \Cref{alg:outT}.
\end{definition}

\begin{algorithm}[t]\setstretch{0.85}
\small
\caption{\textsc{Out}$(T)$: output of an accepting computation tree $T$}
\label{alg:outT}
\DontPrintSemicolon
\SetKwFunction{Out}{OutNode}
\SetKwProg{Fn}{Function}{:}{}
\Fn{\Out{$u$}}{
  $\mathcal{C}\gets \langle\,\rangle$\;
  \ForEach{child $v$ of $u$ in $\prec$ order}{
      concatenate $\Out(v)$ to the end of $\mathcal{C}$\;
  }
  \lIf{$\sigma(u)=\circ$}{
      \KwRet{$\mathcal{C}$}
  }
  $\tau \gets$ an ordered tree with only node $r$ with label $\sigma(u)$\;
  \ForEach{$S\in\mathcal{C}$ in  $\prec$  order}{
      add $S$ to $\tau$ with the root of $S$ becoming a child of $r$\;
  }
  \KwRet{$\langle \,\tau\,\rangle$}\;
}
\BlankLine
\KwRet{\Out{root($T$)}}\;
\end{algorithm}

\begin{definition}[Output Set and Span]
For $S, Z : \mathbb{N} \to \mathbb{N}$, let  
$\mathsf{ACT}_M(x; S, Z)$
denote the set of all $\mathsf{Out}(T)$ of accepting computation trees $T$ of $M$ on input $x$, where $T$ uses at most $S(|x|)$ space and has at most $Z(|x|)$ nodes.
We define the \emph{span} of $M$ as
\[
    \spn_M(x; S, Z)
      := \bigl|\, \{\mathsf{Out}(T) \mid T \in \mathsf{ACT}_M(x; S, Z)\}  \, \bigr|, i.e.,
\]
the number of distinct (non-isomorphic) outputs of accepting computation trees within given~bounds.
\end{definition}

\begin{definition}[The Class $\spn\alt(S,Z)$]
\label{def:spanalt}
The class $\spn\alt(S,Z)$ consists of all functions
$f : \Sigma^* \to \mathbb{N}$
for which there exists an ATrM $M$ such that
\(
    f(x) = \spn_M(x; S, Z)
\)
for all $x \in \Sigma^*$.
\end{definition}

\Cref{def:spanalt} is now the ATM analogue of the established span complexity classes (spanL, spanP) that we originally wanted. Inspired by the recent FPRAS result for \#ACQ and the LOGCFL-completeness of the corresponding decision problem, we also focus on the class $\spn{}\alt(\log, \poly)$ as a natural counting version of LOGCFL. However, since we are primarily interested in finding the largest possible complexity class for counting problems such that problems in the class are guaranteed to have an FPRAS, we will slightly expand this definition to define the class \ourclass.

The natural question of whether $\spn\alt(\log, \poly)$ is already closed under polynomial time many-one reductions is difficult to approach. Although interesting, this question is not critical in terms of our goal of determining the existence of efficient approximations (FPRAS), but we will argue about conditional non-closedness in Section~\ref{sec:spanatm}.

\medskip
We conclude our development of ATrMs with the following technical observation that allows us to assume that our ATrMs behave simply, universal branches are always binary, and without loss of generality, we can assume that there are no accepting computation trees that are beyond the tree size bound we consider.

\begin{proposition}
\label{prop:normal.atrm}
    Let $P$ be a counting problem in $\spn\alt(S,Z)$ where $Z \in \poly(|x|)$. Then there exists $S' \in O(S+\log|x|)$, $Z' \in \poly(|x|)$, and an ATrM $M$ with the following properties:
    \begin{itemize}
        \item Every universal state $q$ of $M$ has $|\delta(q,\gamma)|=2$ for every $\gamma\in\Gamma$. In other words, universal states produce exactly two computation branches.
        \item For every input $x$ we have:
        \qquad\qquad\(
        \spn_M(x; S', Z') = \spn_M(x; S', \infty) = \spn_P(x; S, Z).
        \)
    \end{itemize}
\end{proposition}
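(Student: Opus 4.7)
The plan is to construct $M$ from the ATrM $M_P$ witnessing $P \in \spn\alt(S,Z)$ in two conceptual stages: a binarization of universal branching, followed by the addition of an explicit subtree-size budget that each computation is forced to respect. Throughout, every newly introduced transition will emit the silent symbol $\circ$ so that the $\mathsf{Out}$ forest is preserved.

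For the first stage, I would replace each universal state $q$ whose transition set $\delta(q,\gamma)$ has $k>2$ elements $(q_i,\gamma_i,D_i,\omega_i)$ by a right-leaning chain of $k-1$ new binary universal states $q'_1,\ldots,q'_{k-1}$. The first child of $q'_j$ inherits the original transition to $q_j$ together with its output $\omega_j$; its second child is a $\circ$-transition to $q'_{j+1}$, except at $q'_{k-1}$, whose second child takes the $k$-th original transition with output $\omega_k$. A direct induction on \textsc{OutNode} then shows that the forest produced at the predecessor of $q$ is unchanged: the intermediate $\circ$-labeled universal nodes are collapsed by the $\sigma(u)=\circ$ case of \Cref{alg:outT}, so the concatenation $\mathsf{Out}(c_1)\cdots\mathsf{Out}(c_k)$ reaches the nearest non-$\circ$ ancestor in the same $\prec$-order as before.

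For the second stage, I would add an $O(\log Z)=O(\log|x|)$-bit budget counter $b$ to the work tape, initialized to $Z(|x|)$ by a short deterministic preamble. At each existential step I decrement $b$ and reject if $b<1$; before each universal step, I would insert a $\circ$-emitting existential pre-step that non-deterministically guesses a split $b-1 = b_1+b_2$ with $b_1,b_2\ge 1$ and writes $b_i$ onto the work tape of the $i$-th successor branch, while accepting leaves require $b\ge 1$. The invariant ``$b$ at a node upper-bounds the size of the $M_P$-subtree rooted there'' then holds by induction, so every accepting computation tree of $M$ yields via $\mathsf{Out}$ an output of some accepting tree of $M_P$ of size at most $Z(|x|)$, and conversely every such $M_P$-tree is realized by $M$ by guessing the exact subtree sizes at each split. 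Since every inserted transition emits $\circ$, $\mathsf{Out}$ is preserved and the output sets coincide; with $Z' \in \poly(|x|)$ absorbing the extra nodes from binarization and budget-guessing and $S' := S + O(\log|x|)$, one obtains \(\spn_M(x;S',Z') = \spn_M(x;S',\infty) = \spn_P(x;S,Z)\), where the middle equality holds because the budget mechanism rejects any attempt to build an accepting tree with more than $Z'$ nodes.

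The main obstacle is the $\mathsf{Out}$-preservation argument for binarization: because \textsc{OutNode} recurses by forest concatenation across $\circ$-labeled nodes rather than by creating a new node, the auxiliary universal states must be introduced strictly as $\circ$-emitters, and the original outputs $\omega_i$ must be reattached to the transitions reaching the original successor configurations rather than to the new universal states themselves; otherwise the forest order or shape at a non-$\circ$ ancestor would change. A secondary subtlety in the budget phase is that splits with $b_i=0$ must be forbidden, since every accepting subtree of an ATM contains at least its leaf; allowing $b_i=0$ would permit phantom branches with no $M_P$-counterpart and could corrupt the span count.
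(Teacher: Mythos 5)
Your construction is essentially the paper's own two-stage proof: binarize universal fan-out via a chain of $\circ$-emitting auxiliary universal states (so that the $\sigma(u)=\circ$ case of \Cref{alg:outT} makes them transparent and the output forest is preserved in $\prec$-order), then enforce the tree-size bound with an $O(\log|x|)$-bit budget counter that is split by an existential pre-step before each universal branch. Two small remarks. First, you only treat universal states with $k>2$ successors; the statement requires \emph{every} universal state to have exactly two, so you must also handle $k=0$ (relabel as accepting) and $k=1$ (relabel as existential), as the paper does. Second, the paper existentially guesses the budget and requires the counter to hit exactly $0$ at accepting leaves, yielding a bijection between accepting trees of the old and new machines; your deterministic initialization to $Z(|x|)$ with an upper-bound invariant is also sound for span preservation, since multiple realizations of the same $M_P$-tree all emit the same output, and your worry about $b_i=0$ splits is unfounded (such a branch immediately rejects, so it cannot contribute a spurious accepting tree), though forbidding them is harmless.
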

\begin{proof}[Proof Sketch]
    If a universal state $q$ has one transition, just make it existential. If it has none, make it accept; if it has more than two, branch to the least of them, and to a new universal state with the rest of them. Then iterate, which blows up the tree size by a factor of 2 at most.

    For the other property, compute $Z(|x|)$ in $\log(Z(|x|))$ space and guess a number at most $Z(|x|)$ for how many original states there will be in the accepting computation tree. Then add auxiliary states to count the number of original states passed. At universal branches, guess how to split the remaining size. At accepting states, accept iff the counter is 0, otherwise reject. Clearly, only a polynomial number of extra states and space $\log(Z(|x|)) = \log(|x|)$ (recall $Z {\,\in\,} \poly(|x|)$) is~needed.
\end{proof}

\begin{algorithm}[t]\setstretch{0.85}
\small
\caption{\textsc{ACQTransduce}: alternating transducer for \#ACQ given a join tree}
\label{alg:acq}
\textbf{Input:} Acyclic CQ $q(\bar{x})=\exists \bar{y}\ \bigwedge R_n(\bar{v}_n)$, database $D$, rooted join tree $T$ of $q$ with root $r$.\\
\SetKwProg{Fn}{Function}{:}{}

\DontPrintSemicolon

\SetKwFunction{EvalNode}{EvalNode}

\Fn{\EvalNode{$n,\theta$}}{
  Let the label of $n$ be the atom $R(\bar{v})$ with $\bar{v}=(v_1,\ldots,v_k)$\;
  $\bar{a} \leftarrow$ \textbf{Guess} a tuple $\bar{a}\in R^D$\;
  \lIf{there exists $v_i \in \mathrm{dom}(\theta)$ with $\theta(v_i)\neq a_i$}{\textbf{Reject}}
  $\bar{b} \leftarrow (b_1,\ldots,b_k)$ where $b_i \gets a_i$ if $v_i\in \bar{x}$, and $b_i \gets \star$ otherwise\;
  \textbf{emit} $\bar{b}$\;
  $\theta_n \gets \{\, v_i \mapsto a_i \mid i\in[k] \,\}$\;
   
  \ForEach(\tcp*[f]{ $\forall$-branch over children }){child node $c$ of $n$ }{
    \lIf{\EvalNode{$c,\theta_n$} \textbf{Rejects}}{\textbf{Reject}} 
  }
  \textbf{Accept}\;
}
\Return \EvalNode{$r,\emptyset$}\;
\end{algorithm}
 
\begin{example}[Acyclic Conjunctive Queries]
\label{ex:cq.atrm}
We assume that the reader is familiar with conjunctive queries, and their $(\alpha$-)acyclicity (see, e.g.,~\cite{DBLP:journals/jacm/Fagin83}). The problem \#ACQ refers to the associated combined complexity counting problem, for a given acyclic CQ $q$ and database $D$, compute $|q(D)|$. 

We show how \#ACQ can be expressed as $\spn{}\mathsf{ALT}(\log, \poly)$ on ATrMs. This is analogous to the reduction of counting answers to ACQs to tree automata, given by Arenas et al.~\cite{ArenasEtAl21}, to originally show that \#ACQ admits an FPRAS. We present it here as an instructive example of observing existence of an FPRAS via our framework.

Suppose that our input is $q,D, T$,  where $T$ is a given join tree of $q$\footnote{We will see at the end of the next subsection, why assuming the join tree as input is unproblematic.}, for the ATrM described in \Cref{alg:acq}.
Each accepting computation tree corresponds exactly to a consistent assignment. The output corresponds directly to the join tree with database constants in place of output variables, and a ``$\star$'' in place of all other variables. Thus, any two variable assignments that agree on the free variables of $q$, produce the same output. It is clear that this ATrM only requires logarithmic space and that every accepting computation is polynomial size in the input.
Of particular note is that this algorithm lines up with the standard LOGCFL algorithm for the corresponding decision problem, whether an acyclic conjunctive query has a non-empty output. This translates just as well to other LOGCFL query problems. For instance, the analogous result for acyclic CRPQs follows by replacing Line~4 in \Cref{alg:acq} with a guess of 2 constants and an $\mathsf{NL}$ check for their reachability under some regular path pattern\footnote{In combined complexity, \#CRPQ $\le_{\mathsf{pars}\text{-}P} $ %
\#CQ; hence \#Acyclic-CRPQ admits an FPRAS (already via the FPRAS for \#ACQ).}. Figure~\ref{fig:acq} depicts the outputs for a particular ACQ $q(x,w)$.

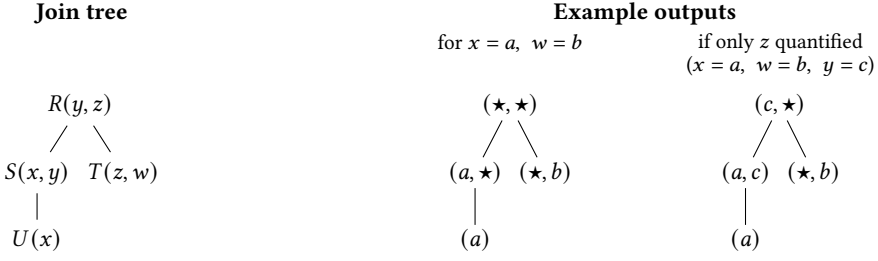
\begin{figure}[t]
\begin{tikzpicture}[every node/.style={font=\small}]

\node[, rounded corners, inner sep=2pt, align=center] (jt) at (-3.7,1.9) {\bfseries Join tree};
\node at (-3.7,-0.2) {
\begin{forest}
for tree={rounded corners, s sep=1mm, l sep=1mm, inner sep=2pt, align=center}
[{$R(y,z)$}
  [{$S(x,y)$} [$U(x)$]]
  [{$T(z,w)$}]
]
\end{forest}
};

\node[, rounded corners, inner sep=2pt, align=center] (ex) at (3.8,1.9) {\bfseries Example outputs};

\node at (2.0,1.5) {\footnotesize for $x=a,\;w=b$};
\node at (2.0,-0.2) {
\begin{forest}
for tree={rounded corners, s sep=1mm, l sep=1mm, inner sep=1.5pt}
[{$(\star,\star)$}
  [{$(a,\star)$} [{$(a)$}]]
  [{$(\star,b)$}]
]
\end{forest}
};

\node at (5.6,1.5) {\footnotesize if only $z$ quantified};
\node at (5.6,1.2) {\footnotesize ($x=a,\;w=b,\;y=c$)};
\node at (5.6,-0.2) {
\begin{forest}
for tree={rounded corners, s sep=1mm, l sep=1mm, inner sep=1.5pt}
[{$(c,\star)$}
  [{$(a,c)$} [{$(a)$}]]
  [{$(\star,b)$}]
]
\end{forest}
};
\end{tikzpicture}~\\[-1.35em]
\caption{Illustration of the outputs in \Cref{ex:cq.atrm} for $q(x,w)=\exists y\,\exists z\;\big(S(x,y)\wedge R(y,z)\wedge T(z,w) \wedge U(x)\big)$.}\label{fig:acq}
\end{figure}

\end{example}

\subsection{Relating $\spn\alt$ and \#CFG: The New \ourclass Class}

This section formally captures key relationships between the counting problems over ATrMs and \#CFG. We will also consider analogous problems \#CNFG and \#UCFG, that are defined like \#CFG but restricted to grammars in Chomsky normal form and unambiguous CFGs, respectively.

\begin{lemma}\label{lem:cnfg2alt}
     \#CNFG $\le_{\mathsf{pars}\text{-}L} \spn\alt(\log , \poly)$.
\end{lemma}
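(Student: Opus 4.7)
My plan is to show that $\#\text{CNFG}\in \spn\alt(\log,\poly)$ directly by constructing an ATrM that simulates top-down derivations in a CNF grammar. The required parsimonious log-space reduction then is the identity, which trivially preserves counts.

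The ATrM $M$, on input $(G,n)$ with $G=(N,\Sigma,R,S)$ in CNF and $n$ in unary, keeps on its work tape only a pair $(A,m)$ with $A\in N$ and $1\le m\le n$, using $O(\log|N|+\log n)=O(\log|\text{input}|)$ bits. The initial configuration has $(S,n)$ on the tape and is in an existential state $q_\exists$. From $q_\exists$ with tape content $(A,m)$:
\begin{itemize}
    \item If $m=1$: for each terminal rule $A\to a\in R$, there is one transition to the accepting state $q_{\mathsf{acc}}$ that emits the terminal $a$.
    \item If $m>1$: for each binary rule $A\to BC\in R$ and each split $(m_1,m_2)$ with $m_1{+}m_2=m$, $m_1,m_2\ge 1$, there is one transition to a universal state $q_\forall$ emitting $\circ$, storing $(B,m_1,C,m_2)$ on the tape.
\end{itemize}
From $q_\forall$ with tape content $(B,m_1,C,m_2)$, there are exactly two transitions, ordered by $\prec$: the first goes to $q_\exists$ with tape content $(B,m_1)$ (emitting $\circ$), the second to $q_\exists$ with tape content $(C,m_2)$ (emitting $\circ$). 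The $\prec$-order is chosen so that the $B$-branch comes before the $C$-branch, mirroring left-to-right derivation order.

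The key correctness step is to establish a bijection between accepting computation trees of $M$ on $(G,n)$ and derivation trees of $G$ with yields of length exactly $n$. This is straightforward by induction on $m$: the existential choices select the rule used at the corresponding derivation node, while the universal branching faithfully recurses into the two subderivations for $B$ and $C$, with the length budget enforcing that the resulting yield has length $n$. I then argue that \textsc{Out} applied to any accepting computation tree $T$ produces a tuple of $n$ single-node trees whose labels, read in order, are exactly the yield of the corresponding derivation tree. This follows because the only non-$\circ$ labels in the computation tree appear on the accepting configurations reached by terminal-rule transitions, and \textsc{Out} on a $\circ$-labeled node just concatenates its children's outputs in $\prec$-order, so all labels bubble up to a single flat tuple in left-to-right derivation order. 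Consequently, two accepting computation trees yield the same \textsc{Out}-value iff the underlying derivation trees produce the same word; hence $\spn_M((G,n);\log,\poly)=|L_n(G)|$.

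The technical bookkeeping is routine. Space is $O(\log(|G|+n))$, as only the current pair $(A,m)$ (or the transient rule $A\to BC$ with splits $m_1,m_2$) sits on the tape, while the grammar and the unary bound reside on the input tape. Tree size is bounded by $O(n)$ nodes (a constant number of configurations per derivation-tree node in a CNF derivation of length $n$), which is polynomial. The step that requires the most care is verifying that the \textsc{Out} semantics with $\circ$-transitions really does collapse to a flat tuple equal to the yield — in particular, that universal-state configurations themselves carry $\sigma=\circ$ (they were reached by $\circ$-emitting transitions from $q_\exists$) and so do not spuriously introduce extra tree structure into the output — and that the $\prec$-ordering at the unique universal state type is consistent with left-to-right concatenation. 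Once that is checked, the lemma follows.
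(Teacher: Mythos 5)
Your construction is essentially the paper's own proof: the paper also takes the identity as the parsimonious reduction and gives an ATrM (its Algorithm~\ref{alg:atm-cfg-transducer}) that guesses a CNF rule at each step, existentially splits the length budget at binary rules, universally branches into the two nonterminals, emits the terminal only at the leaves, and relies on the $\circ$-transparency of \textsc{Out} to flatten the output into a tuple of $k$ singleton trees equal to the yield, so that span counts distinct words. Your additional care about the $\prec$-ordering and the bijection with derivation trees matches the paper's (more terse) justification; the only cosmetic differences are that you fold the split guess into the rule guess and omit the $n=0$/$\varepsilon$ edge case that the paper handles explicitly.
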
~\\[-2.5em]
\begin{proof}
We directly implement \#CNFG, where the inputs are CFG $G$ in Chomsky normal form and integer $k$ (in unary), with a fixed log-space, polynomial tree-size ATrM. The respective reduction is thus simply identity.
In particular, \Cref{alg:atm-cfg-transducer} shows a $\spn\alt(\log , \poly)$ algorithm for \#CFG. For every word $w_1w_2\cdots w_k \in \mathcal{L}^k(G)$, there is an accepting computation tree, mirroring a derivation tree in $G$ of the word, that emits non-$\circ$ output only at exactly $k$ leaves. Therefore, $\mathsf{Out}(T)$ is a tuple of $k$ singleton trees. It is easy to verify that the $i$th tree in $\mathsf{Out}(T)$ has precisely $w_i$ as label. Furthermore, there are no other accepting computations.
Again, similarly to \Cref{ex:cq.atrm}, this is analogous to standard LOGCFL algorithms for problems on context-free languages.%
\end{proof}

The restriction to Chomsky normal form is critical to the reduction above. In particular, since accepting computation trees in \Cref{alg:atm-cfg-transducer} directly follow the derivation trees of a word, we need to avoid superpolynomial derivation trees that produce the empty word $\epsilon$, to remain within the desired polynomial tree size bound. In the context of our search for counting complexity classes that imply an FPRAS this is also no further cause for concern, as polynomial time pre-processing to transform a CFG into CNF does not affect the existence of an FPRAS. 
Since the reduction in \Cref{alg:atm-cfg-transducer} produces an accepting computation tree for each derivation tree of the grammar, we see that each word has a single accepting computation tree when the grammar is unambiguous, making the additional power of $\spn$ unnecessary. 

\begin{corollary}\label{cor:ucfg}
\#UCFG $\le_{\mathsf{pars}\text{-}P} \#\alt(\log , \poly)$.
\end{corollary}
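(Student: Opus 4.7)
The plan is to reduce \#UCFG to the image of \Cref{lem:cnfg2alt} while exploiting that unambiguity makes the span operation redundant. Given an unambiguous CFG $G$ and a unary length bound $n$, I would first transform $G$ in polynomial time into an equivalent CFG $G'$ in Chomsky normal form that is still unambiguous, with $L_n(G')=L_n(G)$ for $n\geq 1$ (and the empty word handled separately by the at-most-one rule $S\to\epsilon$ that CNF permits). The standard construction---$\epsilon$-elimination, unit-rule elimination, then binarization of long right-hand sides via fresh nonterminals---can be carried out deterministically so that each step induces an injection on derivation trees, thereby preserving unambiguity.

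I would then apply the fixed log-space, polynomial-tree-size ATrM of \Cref{lem:cnfg2alt} to $(G',n)$. That ATrM produces exactly one accepting computation tree per derivation tree of $G'$ whose yield has length $n$. Since $G'$ is unambiguous, derivation trees of length-$n$ yields are in bijection with words in $L_n(G')=L_n(G)$. Hence the raw number of accepting computation trees already equals $|L_n(G)|$, so the composed reduction is parsimonious into $\#\alt(\log,\poly)$ rather than merely $\spn\alt(\log,\poly)$: distinct computation trees automatically correspond to distinct words, and no span-style deduplication is needed. This matches the intuition highlighted just before the corollary---$\spn$ is precisely the power required when multiple witnesses collapse to the same output, which unambiguity rules out.

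The main obstacle is to rigorously verify that the CNF conversion preserves unambiguity. The subtle step is $\epsilon$-elimination: a production $A\to BC$ together with $B\to\epsilon$ can in principle be folded into $A\to C$ in more than one way when several nonterminals in a long RHS derive $\epsilon$, which threatens to create multiple $G'$-derivations of a single word. To avoid this I would invoke the standard careful variant of the construction, in which for every $G$-production one adds exactly the variants obtained by deleting the subsets of nullable occurrences whose subtrees actually derived $\epsilon$; the map from $G$-derivation trees of nonempty yields to $G'$-derivation trees is then a bijection, and unit-rule elimination plus binarization preserve this bijection by construction. Once unambiguity of $G'$ is in hand, both the parsimony and the $\#\alt(\log,\poly)$ membership follow directly from \Cref{lem:cnfg2alt}.
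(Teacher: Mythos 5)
Your proposal is correct and follows essentially the same route as the paper: convert the grammar to Chomsky normal form in polynomial time and observe that \Cref{alg:atm-cfg-transducer} produces exactly one accepting computation tree per derivation tree, so unambiguity makes the raw count of accepting trees equal $|L_n(G)|$ with no span-style deduplication needed. Your additional care in verifying that the CNF conversion (in particular $\epsilon$-elimination) preserves unambiguity addresses a point the paper leaves implicit, and is a refinement rather than a deviation.
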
%
\medskip\begin{algorithm}
[b]\setstretch{0.85}
\small
\caption{Alternating transducer for length-constrained derivability in a CNF grammar}
\label{alg:atm-cfg-transducer}
\DontPrintSemicolon
\SetKw{KwAccept}{Accept}
\SetKw{KwReject}{Reject}
\SetKw{True}{true}
\SetKw{False}{false}
\SetKw{Emit}{emit}
\SetKwFunction{TransduceLen}{TransduceLen}
\SetKwProg{Fn}{Function}{:}{end}

\KwIn{A CFG $G=(N,\Sigma,R,S)$ in Chomsky normal form and integer $k\ge0$.%
}

\medskip
\Fn{\TransduceLen{$X,k$}}{
    \lIf*{$k=0$}{
    \lIf*{$X = S$ \textbf{and} $S \to \varepsilon$}{\KwAccept}\lElse{\KwReject}}
    $r \leftarrow$ Guess a rule $r\in R$ with left-hand side $X$.\;
    \lIf{no such rule $r$ exists}{ \KwReject}
    
    \uIf{$r$ is $X\to a$ for some $a\in\Sigma$}{
        \lIf*{$k=1$}{
        \Emit $a$;\text{ }\KwAccept
        }
        \lElse{\KwReject}
    }
    \uElseIf{$r$ is $X\to BC$ with $B,C\in N$}{
        $k_1\leftarrow$ Guess $k_1\in\{0,\dots,k\}$%
            \tcp*[l]{$\forall$ branch over both nonterminals}
            \lIf{\TransduceLen{$B,k_1$} \textbf{and} \TransduceLen{$C,k{-}k_1$} both \KwAccept}{ \KwAccept}
            \lElse{
             \KwReject
             }
    }
}
\BlankLine
\Return \TransduceLen{$S,k$}
\end{algorithm}

To relate alternating transducers back to CFGs, we show that the span of an ATrM can be represented as the language of an efficiently constructed grammar.

~\\[-2.5em]
\begin{lemma}
\label{lem:atrm2cfg}
    Let $M$ be an ATrM, $x$ an input string, $S,Z$ be space and tree-size bounds on $M$.
    Then, there exists a CFG $G$ such that
    \(
\spn_M(x; S,Z) = |\mathcal{L}_{\le 3Z(|x|)}(G)| \qquad \text{and} \qquad \|G\| \in O(2^{S(|x|)}).
\)

\noindent Moreover, $G$ can be computed in $O({S(|x|)})$ space.
\end{lemma}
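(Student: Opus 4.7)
The plan is to build a CFG $G$ whose nonterminals are the configurations of $M$ and whose derivations mirror accepting computation trees, with the right-hand sides carrying a bracketed encoding of the output forest so that distinct outputs yield distinct terminal strings. First, I would apply Proposition~\ref{prop:normal.atrm} to replace $M$ by an equivalent ATrM $M^{*}$ with only binary universal branching and whose tree-size bound has been internalized, so that every accepting computation of $M^{*}$ of any size corresponds to an accepting computation of the original $M$ of size at most $Z(|x|)$. In particular, the set of output forests produced by $M^{*}$ on $x$ under space $S^{*}\in O(S+\log|x|)$ coincides with the set of output forests of $M$ on $x$ under the original bounds $(S,Z)$. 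This step circumvents the difficulty that $\circ$-transitions admit arbitrarily long ``silent'' computation stretches that could otherwise inflate a tree without changing its output.

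Next, I take the nonterminals of $G$ to be the configurations $C$ of $M^{*}$ on $x$ using at most $S^{*}(|x|)$ space, with terminals $\Sigma\cup\{\mathtt{[},\mathtt{]}\}$ for a fresh pair of bracket symbols. For every transition $(C',\omega)\in\delta(C,\gamma)$ I set $\mathrm{enc}(\omega,C')=N_{C'}$ if $\omega=\circ$ and $\mathrm{enc}(\omega,C')=\mathtt{[}\,\omega\,N_{C'}\,\mathtt{]}$ if $\omega\in\Sigma$. I then add the rule $N_{C}\to\varepsilon$ when $C$ is accepting, one rule $N_{C}\to\mathrm{enc}(\omega,C')$ per outgoing transition when $C$ is existential, and the single rule $N_{C}\to \mathrm{enc}(\omega_{1},C_{1})\,\mathrm{enc}(\omega_{2},C_{2})$ when $C$ is universal with ordered transitions $(C_{1},\omega_{1})\prec(C_{2},\omega_{2})$; rejecting configurations receive no rules. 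The start symbol is $N_{C_{0}}$ for the initial configuration $C_{0}$ on $x$. By induction on tree height, the parse trees of $G$ rooted at $N_{C_{0}}$ are in bijection with accepting computation trees $T$ of $M^{*}$ on $x$, and the derived terminal word is precisely the bracketed encoding of $\mathrm{Out}(T)$ produced by Algorithm~\ref{alg:outT}; since well-formed bracket words are uniquely decodable, the map $\mathrm{Out}(T)\mapsto \mathrm{enc}(\mathrm{Out}(T))$ is injective.

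It then follows that $|\mathcal{L}(G)|$ equals the number of distinct outputs of $M^{*}$, which equals $\spn_{M}(x;S,Z)$ by the choice of $M^{*}$. Each forest contributes exactly three terminal symbols per non-$\circ$ node, and because every output arises from an original-$M$ computation tree of at most $Z(|x|)$ nodes, the output forest itself has at most $Z(|x|)$ nodes, which gives $|\mathrm{enc}(\mathrm{Out}(T))|\le 3Z(|x|)$ and hence $\mathcal{L}(G)=\mathcal{L}_{\le 3Z(|x|)}(G)$. For the resource bounds, the number of space-$S^{*}$ configurations is $2^{O(S(|x|))}$, and each contributes a constant number of rules of constant size, giving $\|G\|\in O(2^{S(|x|)})$; the grammar is emitted on the fly by enumerating configurations and their transitions in $O(S(|x|))$ space. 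The main obstacle I expect is precisely the decoupling of grammar word length from ATrM computation-tree size caused by $\circ$-transitions: a direct construction from $M$ would either require inflating nonterminals with an explicit size counter or a nontrivial short-witness argument about silent cycles, and the clean fix is to invoke Proposition~\ref{prop:normal.atrm} up front so that the length bound $\le 3Z(|x|)$ automatically matches $\spn_{M}(x;S,Z)$.
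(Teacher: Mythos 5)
Your construction is essentially identical to the paper's: nonterminals are configurations, each non-$\circ$ output symbol is wrapped in a fresh bracket pair around the successor nonterminal, accepting configurations derive $\varepsilon$, and the word of a derivation is the uniquely decodable bracket encoding of $\mathsf{Out}(T)$, giving $\spn_M(x;S,Z)=|\mathcal{L}_{\le 3Z(|x|)}(G)|$ with $\|G\|\in O(2^{S(|x|)})$. If anything, you are slightly more explicit than the paper's sketch in invoking Proposition~\ref{prop:normal.atrm} up front to internalize the tree-size bound, which is indeed needed so that oversized accepting trees with few outputs cannot contribute spurious short words.
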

\begin{proof}[Proof Sketch]
For $M$ with alphabet $\Sigma \cup \{\circ\}$ and input $x$, let $\mathcal{C}$ be the set of all possible configurations. Define CFG $G=(\mathcal{C}, \Sigma \cup \{\langle,\, \rangle\}, R, C_0)$ as follows:
$C_0$ is the initial configuration of $M$ with input $x$. For the brevity of rules, we first define the function $\mathsf{outwrap}$ that takes a configuration and $\sigma \in \Sigma\cup \{\circ\}$, and returns a string in $(N \cup \Sigma \cup \{\langle, \rangle\})^*$.
\[
\mathsf{outwrap}(C,\sigma) = \begin{cases}
    C & \text{if $\sigma=\circ$} \\
    \langle\sigma C\rangle & \text{otherwise}
\end{cases}
\]

The transitions of $M$ translate into following rules $R$ (reject transitions induce no rules):
\begin{align}
    & C \to \mathsf{outwrap}(C',\sigma) & \text{for all $C$ at existential states with  $(C',\sigma) \in \delta(C)$}  \\
    & C \to \mathsf{outwrap}(C_1,\sigma_1)\mathsf{outwrap}(C_2,\sigma_2) & \text{for all $C$ at universal states with}  \notag\\[-.5em]
    & & \text{$\{(C_1,\sigma_1), (C_2,\sigma_2)\} = \delta(C)$ and $C_1 \prec C_2$}\\
    & C \to \varepsilon & \text{if $C$ is an \textsf{Accept} state}
\end{align}
There is a bijection $\iota$ between words in $\mathcal{L}(G)$ and accepting computation trees of $M$ on input $x$. In particular, $\spn_M(x) = |\mathcal{L}(G)|$.
Suppose we have space and size bounds $S,Z$, respectively. Then $|\mathcal{C}|$, and by extension, $|R|$ is in $O(2^{S(|x|)})$. Moreover, from the construction of $R$, an accepting computation tree with $n \leq Z$  non-empty outputs corresponds to a word in $G$ with length $3n$. We thus have 
~\\[-2em]\[
\spn_M(x; S,Z) = |\mathcal{L}_{\le 3Z(|x|)}(G)| \qquad \text{and} \quad \|G\| \in O(2^{S(|x|)}).\\[-1.25em]
\]%
\end{proof}

\begin{theorem}[Counting CFG Characterization]
\label{thm:cfg.eq.atm}~\\[-1.5em]
\begin{align*}
    & [\spn\alt(\log,\poly)]^{\mathsf{aff}\text{-}P} &&= && [\#\mathsf{CFG}]^{\mathsf{aff}\text{-}P}\text{ and}
    \\[-.45em]
    &
    [\#\alt(\log,\poly)]^{\mathsf{aff}\text{-}P} &&= && [\#\mathsf{UCFG}]^{\mathsf{aff}\text{-}P}.
\end{align*}%
\end{theorem}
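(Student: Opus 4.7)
The plan is to establish each equality by proving both inclusions, chaining together Lemmas~\ref{lem:cnfg2alt} and~\ref{lem:atrm2cfg}, Corollary~\ref{cor:ucfg}, and a small padding gadget that bridges between ``length at most $n$'' counts and the exact-length counting format of \#CFG/\#UCFG.

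For $[\#\mathsf{CFG}]^{\mathsf{aff}\text{-}P} \subseteq [\spn\alt(\log,\poly)]^{\mathsf{aff}\text{-}P}$ I would preprocess the input $(G,n)$ of \#CFG into an equivalent CNF grammar $G'$ via the polynomial-time construction recalled in the preliminaries; the counts $|L_n(G)|$ and $|L_n(G')|$ agree up to $\varepsilon$, so composition with Lemma~\ref{lem:cnfg2alt} yields an affine reduction. For the converse $[\spn\alt(\log,\poly)]^{\mathsf{aff}\text{-}P} \subseteq [\#\mathsf{CFG}]^{\mathsf{aff}\text{-}P}$ I would apply Lemma~\ref{lem:atrm2cfg} to any ATrM witnessing a $\spn\alt(\log,\poly)$ problem; with $S \in O(\log|x|)$ the grammar $G$ has polynomial size and is polynomial-time constructible, giving $\spn_M(x) = |L_{\le 3Z(|x|)}(G)|$. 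To match the exact-length convention of \#CFG I attach a padding gadget with a fresh terminal $\#$ and rules $S' \to S\,P$, $P \to \#\,P \mid \varepsilon$, so that $|L_{=3Z(|x|)}(G')| = |L_{\le 3Z(|x|)}(G)|$ parsimoniously.

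For the second equality, $[\#\mathsf{UCFG}]^{\mathsf{aff}\text{-}P} \subseteq [\#\alt(\log,\poly)]^{\mathsf{aff}\text{-}P}$ is exactly Corollary~\ref{cor:ucfg}. For the remaining inclusion, starting from an ATM $M$ witnessing some $f \in \#\alt(\log,\poly)$, I would promote $M$ to an ATrM $M'$ that emits a distinct symbol on every transition (each transition is identified by an $O(\log|x|)$-bit label, so the output alphabet stays polynomial). Globally distinct transition labels ensure that distinct accepting computation trees of $M'$ yield distinct outputs, so $\spn_{M'}(x) = \#_Z(M,x) = f(x)$. Applying Lemma~\ref{lem:atrm2cfg} and the same padding gadget then produces a polynomial-size grammar $G$ with $|L_{=3Z(|x|)}(G)| = f(x)$.

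The main obstacle is verifying that the grammar obtained in this last step is unambiguous, as required for \#UCFG. I would prove this by induction on leftmost derivations: at a nonterminal $C$ corresponding to an accept or (binary-branching, by Proposition~\ref{prop:normal.atrm}) universal configuration, the applicable rule is already unique; at an existential configuration, the rules all have the form $C \to \mathsf{outwrap}(C',\sigma)$ with pairwise distinct $\sigma$ by design, so the next $\mathsf{outwrap}$-bracket symbol in the suffix uniquely pins down the transition used. Checking additionally that the padding gadget preserves unambiguity---each padded word factors uniquely as an $S$-derivable prefix followed by a $\#^k$ suffix---closes the last inclusion.
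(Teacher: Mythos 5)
Your proposal is correct and follows essentially the same route as the paper's proof: each inclusion is obtained by chaining \Cref{lem:cnfg2alt}, \Cref{lem:atrm2cfg}, and \Cref{cor:ucfg} with polynomial-time CNF conversion. You additionally work out two details that the paper's argument leaves implicit --- the fresh-terminal padding gadget that converts the $\le$-length count delivered by \Cref{lem:atrm2cfg} into the exact-length format of \#CFG, and the device of emitting a globally distinct output symbol per transition together with the leftmost-derivation induction showing the resulting grammar is unambiguous (the paper merely asserts that distinct computations yield distinct derivations) --- and both of these patches are sound and strengthen the argument.
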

\begin{proof}[Proof Sketch]
Immediate from Lemmas~\ref{lem:cnfg2alt} and~\ref{lem:atrm2cfg} as well as the well-known fact that it requires quadratic time to compute the Chomsky normal for of a CFG $G$. %
\end{proof}

Recall, we are particularly interested in affine polynomial time reductions, as the existence of an FPRAS is preserved under them. The established terminology of complexity theory is slightly confusing in the counting context. While $\alt(\log, \poly)$ is contained in polynomial time in a decision context, a polynomial time reduction here corresponds to the complexity class \textsf{FP}, which we show below to be significantly weaker than the alternating counting classes discussed here.
In light of our overall motivation of finding natural complexity classes that imply the existence of an FPRAS, \Cref{thm:cfg.eq.atm} motivates the definition of the following class.

\begin{definition}[The \ourclass Class]
    The class \ourclass of counting problems is the closure of $\spn\alt(\log, \poly)$ under affine poly time reductions, i.e., \ourclass := $[\spn\alt(\log, \poly)]^{\mathsf{aff}\text{-}P}$.
\end{definition}
The definition might prompt the inquisitive reader to wonder whether $\spn\alt(\log,\poly)$ is closed under polynomial time reductions. Already under $c$-monious reductions, $\spn\alt(\log,\poly)$ must not be polynomial-time closed,
assuming P $\neq$ LOGCFL, which is widely expected. Indeed, if $\spn\alt(\log,\poly)$ were polynomial-time closed under $c$-monious reductions,
by the reduction in the proof of Lemma~\ref{lem:cnfg2alt} and since $c$-monious reductions preserve the outcome of underlying decision problems, 
we could decide any P-hard problem in LOGCFL (as LOGCFL $=\alt(\log,\poly)$).
Since the counting versions of some P-hard problems have no FPRAS (cf., \Cref{prop:totp}), we thus inevitably have to consider the affine polynomial-time closure over a machine model that is not inherently closed under them.

Our study of alternating transducers  culminates in the following statement that relates them to the existence of FPRAS, and QPRAS, approximations for problems within natural ATrM classes.
\begin{theorem}[FPRAS and QPRAS]\label{thm:qpras}
    $$\spn\alt(\log^c, \poly) \in \mathsf{QPRAS}, \qquad \spn\alt(\log, \poly) \in \mathsf{FPRAS}, \qquad \ourclass \in \mathsf{FPRAS}.$$
\end{theorem}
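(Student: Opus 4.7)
The plan is to compose Lemma~\ref{lem:atrm2cfg}, which translates ATrMs into CFGs, with Corollary~\ref{lem:cfgfpras}, which supplies an FPRAS for \#CFG$_\le$, and then argue that the resulting approximation scheme survives affine polynomial-time reductions for the \ourclass claim.

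First, for $\spn\alt(\log,\poly) \in \mathsf{FPRAS}$: given an ATrM $M$ with $S \in O(\log n)$ and $Z \in \poly(n)$, Lemma~\ref{lem:atrm2cfg} yields a CFG $G$ of size $2^{O(S)} = \poly(n)$ with $\spn_M(x;S,Z) = |\mathcal{L}_{\le 3Z(|x|)}(G)|$. Since $3Z(|x|)$ is polynomial in $|x|$, feeding $(G, 3Z(|x|))$ (with the length bound in unary) to the FPRAS of Corollary~\ref{lem:cfgfpras} yields an algorithm that runs in time polynomial in $|x|$, $1/\epsilon$, and $\log(1/\delta)$, as required. For $\spn\alt(\log^c,\poly) \in \mathsf{QPRAS}$, exactly the same reduction applies, but $\|G\| \in 2^{O(\log^c n)}$ is now quasi-polynomial, so the FPRAS of Corollary~\ref{lem:cfgfpras} runs in quasi-polynomial time in $|x|$, matching the QPRAS definition from Definition~\ref{def:fpras}.

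For $\ourclass \in \mathsf{FPRAS}$, since $\ourclass = [\spn\alt(\log,\poly)]^{\mathsf{aff}\text{-}P}$, it suffices to show closure of FPRAS under affine polynomial-time reductions. Given a reduction $\pi: A \to B$ with $B \in \spn\alt(\log,\poly)$ and $\#(\pi(I_A)) = c \cdot \#(I_A) + d$ for constants $c, d$, I would invoke the FPRAS for $B$ on $\pi(I_A)$ with rescaled precision $\epsilon' := \epsilon/(1 + |d|/c)$ to obtain $Y$, and return $(Y - d)/c$. A routine calculation confirms multiplicative error at most $\epsilon$ whenever $\#(I_A) \ge 1$; the boundary case $\#(I_A) = 0$, where $\#(\pi(I_A)) = d$, is decided separately by the FPRAS, which with high probability distinguishes $d$ from $\#(\pi(I_A)) \ge c + d$.

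The principal technical point is this last closure step: a non-zero shift $d$ means that a small multiplicative error on $\#(\pi(I_A))$ does not immediately translate into one on $\#(I_A)$. However, since $c, d$ are fixed constants independent of the input, rescaling $\epsilon$ by a constant factor preserves polynomial runtime in $|x|, 1/\epsilon, \log(1/\delta)$, and the overall argument carries through cleanly. The CFG-to-approximation pipeline itself is essentially bookkeeping once Lemma~\ref{lem:atrm2cfg} and Corollary~\ref{lem:cfgfpras} are in hand.
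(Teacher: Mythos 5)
Your proposal is correct and follows essentially the same route as the paper: apply Lemma~\ref{lem:atrm2cfg} to obtain a polynomial-size (resp.\ quasipolynomial-size) CFG and then invoke the FPRAS for \#CFG$_\le$ from Corollary~\ref{lem:cfgfpras}. The only difference is that you explicitly verify closure of FPRAS under affine polynomial-time reductions (including the rescaling of $\varepsilon$ and the $\#(I_A)=0$ boundary case), which the paper treats as a known fact; that added detail is sound and does not change the argument.
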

\begin{proof}
    By \Cref{lem:atrm2cfg}, we can parsimoniously reduce from a problem $P$ in $\spn\alt(\log^c,\poly)$ to \#CFG$_\le$ in quasipolynomial time. The reduction produces a quasipolynomial-sized CFG $G$. Applying the FPRAS for \#CFG~\cite{MeelColnet26} gives us a QPRAS for $P$ (which is an FPRAS for $P$ if $c=1$).
\end{proof}

This then also fills in the missing gap in~\Cref{ex:cq.atrm}, where we assumed  that a join tree was given in the input. As the join tree can be computed in LOGCFL $\subseteq \mathsf{P}$, we see that \#ACQ has an $\mathsf{aff}\text{-}P$ reduction to the problem in the example.

\paragraph*{Counting with Other LOGCFL Machine Models?}
Our development of counting complexity classes based on ATMs is partially motivated by the recent FPRAS for \#ACQ of Arenas et al.~\cite{ArenasEtAl21}. The corresponding decision problem is LOGCFL-complete and thus naturally aligns with alternation (cf.,~\Cref{sec:logcfl}). However, our study of these problems indicates that this type of linkage to the decision level becomes more subtle in a counting context. Given the alternative characterizations of LOGCFL discussed in \Cref{sec:logcfl}, it is tempting to expect matching counting classes, but so far we have not been able to establish such equivalences.

For instance, it is straightforward to define a transducer version of $\mathsf{NAuxPDA}(\log,\poly)$, and $\spn\alt(\log,\poly)$ parsimoniously reduces to the corresponding span class (see \Cref{app:nauxpda}). In the other direction, however, known reductions for the decision case are not counting preserving, and there is no obvious way to repair them. The circuit characterization poses further difficulties: it is unclear how to capture the mechanics of ``\spn{}''-type counting classes in circuits on a conceptual level. Interestingly, the LOGCFL-hardness of ACQs is proved by reducing from $\mathsf{SAC}^1$, which lines up with the discussion in \Cref{sec:complexity} suggesting that $\ourclass$ strictly subsumes $\spn{}L$, while \#ACQ lies in $[\spn{}L]^{\mathsf{aff}\text{-}P}$ by a reduction to \#NFA that counts accepting inputs of tree~automata~\cite{ArenasEtAL22}.

\section{Class \ourclass in the Counting Complexity Landscape}\label{sec:spanatm}
\label{sec:complexity}

We proceed to study where \ourclass fits within the current counting complexity landscape. 
Below, we assume affine polynomial-time reductions, unless mentioned otherwise.
First, we establish that \ourclass is strictly contained in TotP and \#P under standard assumptions.

\begin{proposition}[Strictly Included Above]
\label{prop:totp}
    \ourclass $\subsetneq_{\text{RP $\neq$ NP}}$ TotP.
\end{proposition}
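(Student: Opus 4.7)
The plan is to split the statement into two parts: the inclusion $\ourclass \subseteq \mathsf{TotP}$ and the strict separation, proved under the hypothesis $\mathsf{RP} \neq \mathsf{NP}$. Throughout I would appeal to \Cref{thm:cfg.eq.atm}, which identifies $\ourclass$ with $[\#\mathsf{CFG}]^{\mathsf{aff}\text{-}P}$, so that both parts reduce to assertions about $\#\mathsf{CFG}$ and about a single separating problem in $\mathsf{TotP}$.

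For the inclusion I would first argue that $\#\mathsf{CFG}\in\mathsf{TotP}$. The standard Pagourtzis--Zachos characterization states that $\mathsf{TotP}$ coincides with the closure under parsimonious reductions of self-reducible $\#\mathsf{P}$-functions whose associated decision problem lies in $\mathsf{P}$. Given a grammar in Chomsky normal form, $\#\mathsf{CFG}$ satisfies the length/nonterminal recurrence $f(A,n)=\sum_{A\to BC}\sum_{k=1}^{n-1} f(B,k)\cdot f(C,n-k)$ together with the base case handled at terminals, fitting the self-reducibility schema from \Cref{sec:prelims}; and nonemptiness of $L_n(G)$ is polynomial-time decidable by an easy dynamic programming check. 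It then remains to note that $\mathsf{TotP}$ is closed under affine polynomial-time reductions: multiplication by a constant $a$ is realized by prepending a nondeterministic branch of degree $a$, while an additive constant $b$ is realized by attaching a deterministic gadget contributing $b$ extra paths. Together with \Cref{thm:cfg.eq.atm}, this yields $\ourclass = [\#\mathsf{CFG}]^{\mathsf{aff}\text{-}P} \subseteq \mathsf{TotP}$.

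For strictness I would exhibit a witness in $\mathsf{TotP}\setminus\ourclass$. The natural candidate is $\#\mathsf{IS}$, the problem of counting independent sets of an input graph $G$. It lies in $\mathsf{TotP}$ because its decision version is trivial (the empty set is always independent) and it is self-reducible via $\#\mathsf{IS}(G)=\#\mathsf{IS}(G{-}v)+\#\mathsf{IS}(G{-}N[v])$, so Pagourtzis--Zachos again applies. On the other hand, by Sly's inapproximability theorem, $\#\mathsf{IS}$ admits no FPRAS on bounded-degree graphs unless $\mathsf{RP}=\mathsf{NP}$. Combining this with $\ourclass\subseteq\mathsf{FPRAS}$ from \Cref{thm:qpras} gives $\#\mathsf{IS}\in\mathsf{TotP}\setminus\ourclass$ under $\mathsf{RP}\neq\mathsf{NP}$, completing the strict separation.

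The main obstacle is the first half: writing down a clean justification that the affine closure of a $\mathsf{TotP}$ problem remains in $\mathsf{TotP}$. While the gadgets for scaling and shifting $tot_M$ are conceptually routine, one must be careful that the resulting machine is still in the required format (e.g., normalized so that every branch terminates in accept/reject and all paths are counted) and that the self-reducibility argument for $\#\mathsf{CFG}$ is invoked exactly in the form given in \Cref{sec:prelims}. The strictness direction, by contrast, is essentially immediate once Sly's inapproximability is invoked in conjunction with \Cref{thm:qpras}.
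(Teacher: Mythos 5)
Your strictness argument is essentially the paper's own proof. The paper takes \#2CNF as the witness: it is in $\mathsf{TotP}$ by Bakali et al., has no FPRAS unless $\mathsf{RP}=\mathsf{NP}$ by Sly's theorem, and $\ourclass\subseteq\mathsf{FPRAS}$ by \Cref{thm:qpras}. Your use of $\#\mathsf{IS}$ is the same argument with the reduction unfolded one step --- indeed the paper's footnote derives the \#2CNF inapproximability from the independent-set result --- so that half is fine and matches the paper.

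Where you go beyond the paper is the inclusion $\ourclass\subseteq\mathsf{TotP}$, which the paper's proof does not spell out; your overall plan (identify $\ourclass$ with $[\#\mathsf{CFG}]^{\mathsf{aff}\text{-}P}$, place $\#\mathsf{CFG}$ in $\mathsf{TotP}$ via the Pagourtzis--Zachos characterization, and close under affine reductions) is a sensible way to fill that gap. However, the specific self-reduction you propose is wrong. The recurrence $f(A,n)=\sum_{A\to BC}\sum_{k=1}^{n-1} f(B,k)\cdot f(C,n-k)$ counts \emph{derivation trees}, not distinct words: for an ambiguous grammar a single word of $L_n(G)$ may be produced by many rule choices and many split points $k$, so this recurrence does not compute $|L_n(G)|$. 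It also does not fit the self-reducibility schema of \Cref{sec:prelims}, which is linear of the form $f(x)=t(x)+\sum_i g(x,i)f(h(x,i))$ with polynomial-time computable coefficients, whereas yours multiplies two recursive values. The repair is the standard prefix-based self-reduction: $|L_n(G)|=\sum_{a\in\Sigma}\bigl|\{w : aw\in L_n(G)\}\bigr|$, where each branch fixes one more letter of the word; deciding whether a given prefix extends to a word of $L_n(G)$ is in $\mathsf{P}$ (intersect $G$ with the regular language of strings with that prefix and length $n$, then test emptiness), so the decision version is easy and the recursion has polynomial depth. With that substitution, together with your (correct, if routine) closure of $\mathsf{TotP}$ under affine polynomial-time reductions, the inclusion goes through.
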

\begin{proof}
    By~\cite[Theorem 3]{bakali2020characterizations} we know that \#2CNF is in TotP.
    However, unless RP=NP, \#2CNF does not have an FPRAS~\cite[Theorem 2]{sly2010computational}. However, \ourclass is in 
    FPRAS by Theorem~\ref{thm:qpras}. 
\end{proof}

Therefore, since FPRAS $\subsetneq$ \#P, we conclude that \ourclass $\subsetneq_{\text{RP $\neq$ NP}}$ \#P.
Further, since \#L is strictly contained in spanL, \#L is also strictly contained in \ourclass. %

\begin{proposition}[Inclusion of Below]
    \#L $\subsetneq_{\text{FP $\neq$ \#P}}$ spanL $\subseteq$ \ourclass.
\end{proposition}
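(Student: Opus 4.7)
The plan is to prove the two parts of the proposition separately. For $\mathsf{spanL} \subseteq \ourclass$, I would note that every non-deterministic log-space transducer is syntactically a log-space alternating transducer that happens to use only existential states; hence $\mathsf{spanL} \subseteq \spn\alt(\log,\poly) \subseteq \ourclass$, where the latter inclusion is via the identity reduction and the definition of $\ourclass$ as the $\mathsf{aff}\text{-}P$-closure of $\spn\alt(\log,\poly)$.

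For $\#\mathsf{L} \subseteq \mathsf{spanL}$, I would, given an NL machine $M$, construct a log-space transducer $M'$ that faithfully simulates $M$ step by step and writes the non-deterministic choice taken at each step onto its write-once output tape. Two distinct accepting paths of $M$ differ in at least one non-deterministic choice and therefore yield distinct outputs under $M'$, while each accepting path contributes exactly one output. Hence $\spn_{M'}(x) = acc_M(x)$, which is a parsimonious (in particular affine) log-space reduction.

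The interesting direction is the strictness under $\mathsf{FP} \neq \#\mathsf{P}$. I would combine two facts: (i) $\#\mathsf{L} \subseteq \mathsf{FP}$, since the number of accepting paths of an NL machine equals the number of source-to-sink paths in its polynomial-size configuration graph, computable deterministically in polynomial time via iterated matrix squaring (or straightforward dynamic programming); and (ii) $\#\mathsf{NFA} \in \mathsf{spanL}$, since a log-space transducer can guess a word $w$ of length at most $n$ symbol by symbol onto its output tape while simulating the NFA in log space, accepting iff the simulation accepts, so that $\spn$ equals the number of distinct accepted words of length $\leq n$. Were $\mathsf{spanL} \subseteq \#\mathsf{L}$, we would then have $\#\mathsf{NFA} \in \mathsf{FP}$; combined with the standard fact that $\#\mathsf{NFA}$ is $\#\mathsf{P}$-hard under polynomial-time Turing reductions, this would yield $\#\mathsf{P} \subseteq \mathsf{FP}^{\#\mathsf{NFA}} \subseteq \mathsf{FP}^{\mathsf{FP}} = \mathsf{FP}$, contradicting the hypothesis. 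The main obstacle I anticipate is citing or re-establishing the Turing $\#\mathsf{P}$-hardness of $\#\mathsf{NFA}$ cleanly (e.g., by routing through $\#\mathsf{DNF}$, which is Turing $\#\mathsf{P}$-hard via the complementation identity $\#\mathsf{CNF}(\phi) = 2^n - \#\mathsf{DNF}(\neg\phi)$, and by realising $\#\mathsf{DNF}$ inside $\#\mathsf{NFA}$ with a polynomial-size NFA that guesses and verifies a satisfied term); everything else is definitional unfolding.
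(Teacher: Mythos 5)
Your proof is correct and follows essentially the same route as the paper: the strictness is obtained from $\#\mathsf{L}\subseteq\mathsf{FP}$ together with a Turing-$\#\mathsf{P}$-hard problem residing in $\mathsf{spanL}$ (the paper uses $\#3\mathsf{DNF}\in\mathsf{spanL}$ and the identity $\#3\mathsf{CNF}=2^n-\#3\mathsf{DNF}$ directly, whereas you route through $\#\mathsf{NFA}$ whose hardness you then derive from the same $\#\mathsf{DNF}$ complementation trick), and $\mathsf{spanL}\subseteq\ourclass$ is the same observation that the ATrM model subsumes the $\mathsf{spanL}$ transducer model. The only addition is that you explicitly verify $\#\mathsf{L}\subseteq\mathsf{spanL}$ by outputting the nondeterministic choice sequence, which the paper treats as known.
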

\begin{proof}
    There is a parsimonious reduction from \#P to \#3CNF~\cite[Lemma 3.2]{Valiant79}, which is equivalent to $2^n - $\#3DNF where $n$ is the number of variables of the formula.
    Since \#L $\subseteq$ FP~\cite{AlvarezJenner93} via simulation of the log-space Turing machine, we know that if \#3DNF (contained in spanL~\cite{AlvarezJenner93}) were in \#L, we could reduce \#3CNF to FP. This would then establish that \#P = FP.
    The relationship between spanL and $\ourclass$ follows, since our machine model is an extension of the machine model of spanL.
\end{proof}

We also expect that \ourclass is a strict superset of spanL, which seems to be in line with the Chomsky hierarchy.
However, a formal separation is more challenging than a fallback to known decision hierarchies, as a potential separation also involves non-uniform computation and polynomial time. Still, we can show the following separation
under log-space reductions.

\begin{proposition}[Strict Inclusion]\label{prop:strict}
    spanL $\subsetneq_{\text{NL}\neq\text{LOGCFL}}$ $\#\mathsf{ALT}(\log, \poly) \subseteq \ourclass$ under $c$-monious log-space reductions.
\end{proposition}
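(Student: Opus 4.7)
The plan is to split the claim into three pieces and prove each separately: (i)~$\#\alt(\log,\poly)\subseteq\ourclass$; (ii)~$\spn{}L\subseteq\#\alt(\log,\poly)$ under $c$-monious log-space reductions; and (iii)~strictness of (ii) assuming $\mathsf{NL}\neq\mathsf{LOGCFL}$. Part (i) follows almost immediately from the definitions. Given any ATM $M$ witnessing a problem in $\#\alt(\log,\poly)$, I would turn $M$ into an ATrM $M'$ that, at every transition, emits a non-$\circ$ symbol uniquely encoding the chosen transition (e.g., by writing the index of the source/target configuration pair). By the definition of $\mathsf{Out}(\cdot)$, two distinct accepting computation trees of $M$ then yield two distinct labeled output trees, so $\spn_{M'}(x)=\#_M(x)$ and hence $\#\alt(\log,\poly)\subseteq\spn\alt(\log,\poly)\subseteq\ourclass$ (the last inclusion is by definition of $\ourclass$).

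For part (ii), the plan is a \emph{canonical-path} construction. Given a log-space non-deterministic transducer $M$ witnessing a spanL problem, I would build an ATM $M'$ such that each distinct output $y$ of $M$ contributes exactly $c$ accepting computation trees to $M'$. $M'$ proceeds in two phases. First, $M'$ existentially simulates $M$ along a non-deterministic path $\pi$, arriving at an accepting configuration and recording the produced output $y$. Second, $M'$ universally branches over positions $i\in[1,L]$ and local choices $c'<\pi_i$; for each such pair it invokes a coNL sub-routine that certifies no extension of the prefix $(\pi_1,\dots,\pi_{i-1},c')$ in $M$ yields an accepting computation with output $y$. Since $\mathsf{coNL}=\mathsf{NL}\subseteq\mathsf{LOGCFL}=\alt(\log,\poly)$ by Immerman--Szelepcs\'enyi, this sub-routine fits within the log-space and polynomial tree-size budget. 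The universal-branch structure forces $\pi$ to be the lex-smallest accepting witness of $y$, placing accepting computation trees in $c$-to-one correspondence with distinct outputs of $M$, where $c$ absorbs the fixed sub-tree count of the coNL procedure and gives the $c$-monious reduction.

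For part (iii), I would exhibit a witness counting problem $g\in\#\alt(\log,\poly)\setminus\spn{}L$. Let $M_L$ be an alternating log-space polynomial tree-size machine deciding a $\mathsf{LOGCFL}$-complete language (such machines exist since $\mathsf{LOGCFL}=\alt(\log,\poly)$), and put $g(x):=\#_{M_L}(x)$. Then $g\in\#\alt(\log,\poly)$ and $g(x)>0$ iff $x$ is in the $\mathsf{LOGCFL}$-complete language. Since every $c$-monious reduction preserves the ``count $>0$'' predicate, membership of $g$ in $\spn{}L$ under such a reduction would yield a log-space many-one reduction from a $\mathsf{LOGCFL}$-hard language to the support of a spanL problem, which is decidable in $\mathsf{NL}$. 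This implies $\mathsf{LOGCFL}\subseteq\mathsf{NL}$, contradicting the hypothesis.

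I expect the main obstacle to be the fine-grained counting in part (ii): engineering the coNL sub-routine so that it contributes an input-\emph{independent} constant number of accepting sub-trees per yes-instance, rather than an instance-dependent count. This likely requires a careful implementation of Immerman--Szelepcs\'enyi inductive counting inside the ATM, together with the normal-form manipulations of \Cref{prop:normal.atrm} to make universal branches binary and to keep the overall accepting-tree count multiplicatively under control.
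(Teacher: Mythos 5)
Your part~(iii) is essentially the paper's proof: the paper picks the concrete witness $\#\mathrm{ACQ}^\star$ (Algorithm~\ref{alg:acq} read without ``$\spn$'') and argues that a $c$-monious log-space reduction to $\spn{}$L would let one decide the LOGCFL-complete problem ACQ in NL; your generic LOGCFL-complete machine $M_L$ gives the same contradiction. Your part~(i) (forcing $\mathsf{Out}$ to be injective on accepting trees by emitting a transition identifier on every step) is a sound way to get $\#\alt(\log,\poly)\subseteq\spn\alt(\log,\poly)\subseteq\ourclass$, and is in fact more explicit than anything in the paper's proof.

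The genuine problem is part~(ii). First, the construction does not go through in the model: both the output $y$ and the choice sequence $\pi$ are polynomially long, so a log-space ATM can neither ``record'' $y$ in phase one nor communicate it to the universally spawned verification branches; a branch asked to certify that no deviation at position $i$ produces ``an accepting computation with output $y$'' has no access to $y$ at all. This is not a matter of tuning the Immerman--Szelepcs\'enyi subroutine --- it is precisely the obstruction that makes $\spn{}$L harder than $\#$L. Second, and more decisively, the inclusion you are trying to establish, $\spn{}\mathrm{L}\subseteq[\#\alt(\log,\poly)]^{c\text{-}\mathsf{mon}\text{-}\log}$, is almost certainly false: the number of size-bounded accepting computation trees of a log-space ATM can be computed \emph{exactly} in polynomial time by dynamic programming over the polynomially many configurations and tree sizes (consistently, Theorem~\ref{thm:cfg.eq.atm} identifies $\#\alt(\log,\poly)$ with $\#$UCFG, whose word counts are polynomial-time computable), so your inclusion would place the $\spn{}$L-complete problem $\#$NFA in FP. Note that the paper's own proof never establishes $\spn{}\mathrm{L}\subseteq\#\alt(\log,\poly)$; despite the ``$\subsetneq$'' notation, its actual content is only the one-sided separation that some problem of $\#\alt(\log,\poly)\subseteq\ourclass$ lies outside $[\spn{}\mathrm{L}]^{c\text{-}\mathsf{mon}\text{-}\log}$ (with $\spn{}\mathrm{L}\subseteq\ourclass$ supplied by the preceding proposition). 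You should drop part~(ii) and present only the separation witness together with part~(i).
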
%
\begin{proof}
Suppose for a contradiction that we could reduce from \#ACQ$^\star$ (the variant of \#ACQ in $\#\mathsf{ALT}(\log, \poly)$, solved via Algorithm~\ref{alg:acq} without ``\spn{}'') in log-space to spanL.
Then, by construction and since the reduction is $c$-monious we could decide ACQ by means of a log-space Turing machine, i.e., in NL. This contradicts LOGCFL-completeness of ACQ~\cite{GottlobLeoneScarcello01}, assuming %
that NL $\neq$ LOGCFL.
\end{proof}

Although this argument needs log-space reductions, due to known expressiveness results, we propose the following conjecture, lifting the separation from \Cref{prop:strict} to \textsf{aff-P} reduction.

\begin{conjecture}\label{conj:spanlweaker}
    \spn{}L $\subsetneq$ \ourclass.
\end{conjecture}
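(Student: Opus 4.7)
The plan is to upgrade \Cref{prop:strict} by identifying a problem in $\ourclass$ whose membership in $[\spn L]^{\mathsf{aff}\text{-}P}$ would contradict a structural counting invariant rather than the decision-level gap $\mathsf{NL}\neq\mathsf{LOGCFL}$. A natural candidate is $\#\mathsf{CFG}_{\le}$ itself: by \Cref{thm:cfg.eq.atm}, $\#\mathsf{CFG}_{\le}$ lies in $\ourclass$, and by \Cref{prop:fpras}/\Cref{lem:cfgfpras} it behaves as a ``completeness'' witness for $\ourclass$ under $\leq_{\mathsf{aff}\text{-}P}$.

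First, I would assume for contradiction that $\#\mathsf{CFG}_{\le} \leq_{\mathsf{aff}\text{-}P} Q$ for some $Q\in\spn L$, giving constants $a,b$, a polynomial-time reduction $\pi$, and a log-space transducer $M$ with
\[
|L_{\le n}(G)| \;=\; a\cdot \spn_M(\pi(G,n)) + b .
\]
Second, I would exploit the classical Chomsky--Sch\"utzenberger dichotomy: the generating series $\sum_n |L_n(G)| z^n$ is $\mathbb N$-algebraic for CFGs but in general not $\mathbb N$-rational (the Dyck grammar, yielding the Catalan numbers, is the canonical separator). On the other hand, for any fixed log-space transducer $M$ and any uniform family $(y_n)_n$ obtained by running $M$ on configurations of a bounded configuration graph, the span sequence $(\spn_M(y_n))_n$ admits a transfer-matrix description and is $\mathbb N$-rational. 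The aim would be to propagate this rationality through the reduction $\pi$, obtaining a rational sequence on the right-hand side while the left-hand side is a genuinely algebraic one, yielding the desired contradiction.

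The main obstacle, and what I expect to be the genuinely hard step, is that $\pi$ is a \emph{polynomial-time} map, not a uniform log-space or bounded one, so the family $(\pi(G,n))_n$ can be arbitrarily irregular in $n$ and the span sequence $(\spn_M(\pi(G,n)))_n$ need not satisfy any linear recurrence. In other words, the rational/algebraic transfer step fails because $\pi$ can ``hide'' non-rational behaviour inside the input encoding. Overcoming this obstacle seems to require either (i) a normal-form theorem forcing aff-P reductions into spanL to be realizable by an essentially transfer-matrix-like family of instances, or (ii) a non-uniform circuit or communication-complexity lower bound separating the count functions realizable in $\spn L$ from those in $\spn\alt(\log,\poly)$, strong enough to survive polynomial-time preprocessing. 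Either route is substantially beyond what is currently known, which is exactly why the statement is phrased as a conjecture rather than a theorem; a full proof likely also needs a non-uniform strengthening of $\mathsf{NL}\neq\mathsf{LOGCFL}$ in place of the uniform hypothesis used in \Cref{prop:strict}.
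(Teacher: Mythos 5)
Your proposal matches the paper's own justification almost exactly: the paper likewise motivates the conjecture via \Cref{thm:cfg.eq.atm} (so $\spn{}$L $= \ourclass$ would force a polynomial-time reduction from \#CFG to \#NFA) and the rational-versus-algebraic generating-function dichotomy (citing Chomsky's algebraic-series work, with the Dyck language as the canonical separator), and it pinpoints the very same obstruction you do --- that an $\mathsf{aff}\text{-}P$ reduction may map to a \emph{different} regular language for each $n$, so the rationality transfer breaks down. Since the statement is a conjecture, the paper offers only this heuristic justification rather than a proof, and your proposal is correct and at the same level of completeness.
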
%
\begin{proof}[Justification]
The conjecture is motivated from two fronts. On the one hand, on the machine model level $\spn\alt(\log,\poly)$ is $\spn$L extended by universal branching. Refuting the conjecture would mean that this universal branching can be simulated by polynomial time preprocessing (via the reduction) and existential states.

On the other hand, by \Cref{thm:cfg.eq.atm} $\spn{}\text{L}=\ourclass$ would mean that \#CFG could be reduced to \spn{}L, and thus also to \#NFA~\cite{ArenasEtAl21}, in polynomial time.
\end{proof}

We note that it is known that given a CFG $G$, there cannot be an NFA $A$ such that $L_n(G) = L_n(A)$ for all $n$. 
The generating functions $\sum_k L_k(A) x^k$ for any NFA are known to be rational. However, there are context-free languages where the respective generating function is not rational and hence its coefficients cannot match the one induced by an NFA~(see \cite{chomsky1959algebraic}). However, our setting is more intricate as the word-length $n$ is in the input. Thus, a reduction from \#CFG to \#NFA would allow mapping to a different regular language for each $n$. While this setting is more challenging to analyze formally, it is still expected that this is not possible in a polynomial time reduction.

Answering Conjecture~\ref{conj:spanlweaker} is indeed linked to an open problem, as it would give a strict separation of spanL from FPRAS, which has been open for decades. Although a separation is expected, \#Permanent is not expected to fit in spanL~\cite{Valiant79b}, despite there being a known FPRAS for it~\cite{JerrumEtAl04}. However, \textbf{\ourclass is the first counting class based on extending the Turing machine model of spanL}, as also noted in the literature.
Indeed, in \cite{AchilleosCalki23} it is emphasized that ``\emph{spanL is the only counting class so far defined in terms of TMs,
that} [$\dots$] \emph{contains only approximable problems}''.

\section{Dyck-Constrained Path Queries and How to Count Them}
\label{sec:paths}\label{sec:dyck}
In this section, we demonstrate how our characterization of \ourclass allows us to 
reach beyond the state-of-the-art by establishing an FPRAS for a natural fundamental problem scheme. 
While \ourclass is equivalent to \#CFG under polynomial time reductions, we believe that the move to a Turing machine model greatly lowers the conceptual barrier for recognizing that a problem falls within this class. We illustrate this by showing that a family of natural complex graph languages admits an FPRAS via implementation in terms of ATrMs.

We consider a graph query language for querying walks that require well-formed opening/closing behavior along their paths.
Such queries capture various kinds of complex graph analysis observed in applications such as program verification~\cite{1702388}, the analysis of 
control flow graphs~\cite{Muchnick1997}, and physics~\cite{SCHRAM2013891}.
Indeed, our well-formed s-t walks are extensions of labeled s-t walks,
for which it was unknown whether these can be counted via an FPRAS.
Classical approaches seem different, as it is not clear how to leverage
spanL or how to design a specific FPRAS.

\begin{definition}[Well-formed Strings]
    Let $L$ be a set of
    labels and let $oc \subseteq (L\times L)$ be an opening/closing relationship among labels.
    We define well-formed strings via a CFG $(\{S\}, L, P, S)$ where 
    $P=\{S\rightarrow \epsilon\} \cup \{S\rightarrow \alpha{}S \mid \alpha\in L, \not\exists (\alpha, \cdot)\in oc, \not\exists (\cdot, \alpha)\in oc\}\cup \{S\rightarrow \psi{}S\psi' \mid \psi\in L, (\psi, \psi')\in oc\}$.
\end{definition}

\begin{example}
    An example Dyck-like language would be $L=\{$``a'', ``('', ``['', ``)'', ``]''$\}$ and $oc=\{$(``('', ``)''), (```['', ``]'')$\}$. However, note that our variant is more general, as $oc$ could be any subset of $L \times L$.
\end{example}

Using well-formed strings, we consider the problem \#WFWalks, which counts edge-labeled $s$-$t$ walks in a graph, such that the labeling of any $s$-$t$ walk of length $n$ is a well-formed string. We thereby even support unlabeled edges (which can be seen as $\epsilon$-transitions in NFAs). %

\begin{definition}[Counting Well-formed Walks]\label{def:wfw}
    Given a graph $G=(V,E)$, vertices $s,t \in V$, a set $L$ of
    labels, an edge labeling $\varphi: E'\rightarrow L$ over a set $E'\subseteq E$ of edges, an opening/closing relationship $oc \subseteq (L\times L)$, and an integer $n$.
    Then, %
    the number of well-formed $s$-$t$ walks of size $n$ are defined as $|\{\varphi(P) \mid W\text{ is $s$-$t$ walk of length }n, \varphi(W)\text{ is well-formed string}\}|$, where %
    $\varphi((e_1,\dots,e_n))=\{w \mid \varphi(e_1) \text{ undefined}, w\in \varphi((e_2, \ldots, e_n))\} \cup \{\varphi(e_1)w \mid \varphi(e_1) \text{ defined}, w\in \varphi((e_2, \ldots, e_n))\}$.
\end{definition}

It is easy to see that while counting all $s$-$t$ walks of length $n$ is easy (\#L-complete, equivalent to \#DFA), counting labeled $s$-$t$ walks is \spn{}L-hard and
therefore not in FP (unless FP = \#P):

\begin{observation}
    Counting well-formed %
    walks %
    is \spn{}L-hard (under $\leq_{\mathsf{pars}\text{-}\log}$), even for $|L|{=}2$, $oc{=}\emptyset$.
\end{observation}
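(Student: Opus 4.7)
The plan is to reduce parsimoniously and in log-space from \#NFA for exact length, which is \spn{}L-complete \cite{ArenasEtAl21}. The guiding observation is that with $oc=\emptyset$ the CFG in the definition of well-formed strings generates all of $L^{*}$, so every label sequence is well-formed and \#WFWalks degenerates to counting the distinct label sequences of length-$n$ $s$--$t$ walks, which is exactly the object a span-NFA measures.

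Given an NFA $A=(Q,\Sigma,\delta,q_0,F)$ (with $|\Sigma|\ge 2$) and unary $n$, I fix an injective encoding $\mathrm{enc}:\Sigma\to\{0,1\}^{k}$ for $k=\lceil\log_{2}|\Sigma|\rceil$ and build a graph $G$ as follows: one vertex $v_p$ per state $p\in Q$; for every transition $(p,a,q)\in\delta$ a \emph{private} chain of $k$ labeled edges from $v_p$ to $v_q$ that spells out $\mathrm{enc}(a)$ through $k-1$ fresh intermediate vertices (each with only one outgoing edge); a sink vertex $t$; and for every $f\in F$ a single unlabeled edge $v_f\to t$. Setting $s=v_{q_0}$, $L=\{0,1\}$, $oc=\emptyset$, and walk-length parameter $n':=kn+1$ yields a \#WFWalks instance within the announced restriction.

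The correctness step is to establish a bijection between $L_n(A)$ and $\{\varphi(W)\mid W$ an $s$--$t$ walk of length $n'$ in $G\}$. Privacy of the intermediate vertices forces every length-$n'$ walk to decompose uniquely as $n$ complete transition chains followed by the terminal unlabeled edge; its label sequence is thus a length-$kn$ binary string that decodes, block-by-block via $\mathrm{enc}^{-1}$, into an accepted word of length $n$. Conversely each $w\in L_n(A)$ arises from at least one accepting run of $A$, and all runs realizing $w$ produce the same label sequence $\mathrm{enc}(w)$ in $G$. The two sets therefore have equal cardinality, which is exactly parsimony.

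Finally, the construction is clearly log-space: iterating over states, transitions, and the $k$ bits of each $\mathrm{enc}(a)$ needs only logarithmic counters, and intermediate vertex names admit canonical short identifiers $(p,a,q,i)$. The one subtle point to foreground in the write-up is precisely why the intermediate vertices must be private to each transition; sharing them across transitions would permit length-$n'$ walks that splice chains of different transitions together, producing label sequences outside the image of $w\mapsto\mathrm{enc}(w)$ on $\Sigma^{n}$ and thereby breaking parsimony. With private chains in place, this transfers \spn{}L-hardness to \#WFWalks under $\leq_{\mathsf{pars}\text{-}\log}$ even with $|L|=2$ and $oc=\emptyset$.
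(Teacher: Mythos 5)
Your reduction is correct and is essentially the paper's own argument: the paper's one-line proof also reduces \#NFA (over a binary alphabet) to counting labeled $s$--$t$ walks, using the fact that with $oc=\emptyset$ every label sequence is well-formed, so the problem collapses to the span of an NFA. Your only addition is the explicit private-chain gadget encoding an arbitrary alphabet into $\{0,1\}$, which the paper sidesteps by starting from a binary-alphabet NFA; that elaboration is sound and correctly flags why the intermediate vertices must not be shared.
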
%
\begin{proof}
    \#NFA over binary alphabet directly reduces to counting $s$-$t$ walks of length $n$.
\end{proof}

\begin{figure}
    \centering
    \includegraphics[width=0.35\linewidth]{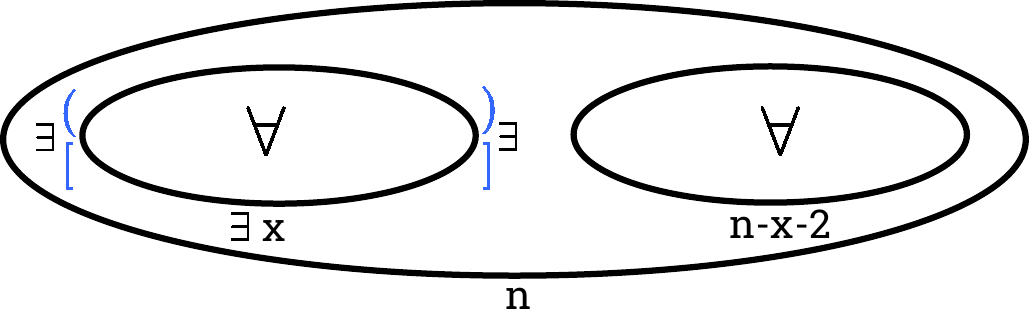}~\\[-1em]
    \caption{Visualization of how the ATrM for Dyck-like walks uses log space via universal branching.}%
    \label{fig:dyck}
\end{figure}

We \emph{establish an FPRAS for the general problem} by designing an ATrM, the idea of which is sketched in Figure~\ref{fig:dyck}.
Whenever we encounter a new opening label like ``('', we guess the length $x$ of this expression
as well as the edge with matching closing label ``)''. Then, we obtain two
subexpressions, one of length $x$ and one of length $n{-}x{-}2$. Both expressions need to be valid
and, therefore, form \emph{universal branches}. The full ATrM for computing well-formed walks is given in Algorithm~\ref{alg:wfwalks}.

\begin{algorithm}[t]\setstretch{0.85}
\small
\caption{\texttt{WFWalks}: ATM for computing well-formed walks on a graph}
\label{alg:wfwalks}
\SetKwFunction{WFW}{WFWalks}
\SetKwProg{Fn}{Function}{}{}
\KwIn{Graph $G=(V,E)$, labeling $\varphi$, open/closing $oc$, start vertex $s$, target vertex $t$, length $n$.}\medskip
\Fn{\WFW{$G, \varphi, oc, s, t, n$}}{
    \lIf*{$n = 0$}{\lIf*{$s=t$}{\textbf{Accept}}} \lElse*{\textbf{Reject}}%

    \tcp{$\exists$-branch: try a neighbor that induces well-formed labelings}%
    $u \leftarrow$ Guess $u$ with $e{=}(s,u)$, $e\in E$ \textbf{and} ($\varphi(e)$\text{ undefined} \textbf{or} $(\varphi(e),\cdot)\in oc$ \textbf{or} $(\cdot, \varphi(e))\notin oc$).

    \lIf*{no such $u$ exists}{\textbf{Reject}}\smallskip
    
            \lIf*{$\varphi(e)$ is defined}{\textbf{emit} $\varphi(e)$} %
            
        \If{$\varphi(e)$ is undefined \textbf{or} $(\varphi(e),\cdot)\notin oc$}
        {
            
            \lIf*{$\WFW(G, \varphi, oc, u, t, n{-}1)$ \textbf{Accepts}}{\textbf{Accept}}  %
            \lElse*{\textbf{Reject}}
        }
        \uElseIf{$(\varphi(e),\cdot)\in oc$} %
        { 
        
    \tcp{$\exists$-branch: try matching edges and size bounds, see Figure~\ref{fig:dyck}}
            $t' \leftarrow$ Guess vertex $t'$ with outgoing edge $e'$ to vertex $t''$ s.t.\ $(\varphi(e), \varphi(e'))\in oc$.
            
            $x \leftarrow$ Guess integer $x$ in $\{0, \dots, n{-}2\}$.

    \lIf*{no such $t'$ \textbf{or} no such $x$ exists}{\textbf{Reject}}
            
            \tcp{$\forall$-branch: recursively count walks, split along $t', t''$}
            \lIf*{$\WFW(G, \varphi, oc, u, t', x)$ \textbf{and} $\WFW(G, \varphi, oc, t'', t, n{-}x{-}2)$ both \textbf{Accept}}{\textbf{Accept}}  %
            \lElse*{\textbf{Reject}}
        }
}
\end{algorithm}%

Note that we can easily extend Algorithm~\ref{alg:wfwalks} so that we %
only consider balanced well-formed walks, where %
a constant number of labels 
appear (roughly) equally often along the path. This is interesting, since %
for deciding the word acceptance problem, context-free grammars are insufficient.

\begin{theorem}\label{thm:wfw}
    There is an FPRAS for counting well-formed $s$-$t$ walks of size $\leq n$ for some unary $n$.
\end{theorem}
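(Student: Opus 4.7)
My plan is to show that the exact-length version of \#WFWalks lies in $\spn\alt(\log,\poly)$ by interpreting Algorithm~\ref{alg:wfwalks} as an ATrM (after the minor cosmetic fix of also emitting the matched closer $\varphi(e')$ alongside the guessed opener $\varphi(e)$, e.g.\ on one of the universal-branch transitions). Once this is established, the problem falls into \ourclass, and Theorem~\ref{thm:qpras} immediately provides an FPRAS for walks of length exactly $n$. The extension to size $\leq n$ then follows by running $n+1$ FPRASes in parallel for $i\in\{0,\ldots,n\}$ and summing them with rescaled error parameters, exactly as in the proof of Corollary~\ref{lem:cfgfpras}.

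For the resource analysis, I would argue that the working tape stores only the current quadruple $(s, t, n, x)$ together with guessed vertices such as $t'$, all of which fit in $O(\log|V|+\log n)$ bits since $n$ is encoded in unary. The apparent recursion in \texttt{WFWalks} is realised by ordinary ATM transitions: existential ``calls'' overwrite the tape while decrementing the length counter, and at an opener the machine universally branches into two independent sub-computations with their own tapes, so no call stack is ever materialised. Since every universal branch consumes an opener/closer pair (reducing the remaining length by at least two) and every existential step consumes exactly one edge, each accepting computation tree has depth $O(n)$ and at most $O(n^2)$ nodes, which is polynomial in the unary-encoded input. Proposition~\ref{prop:normal.atrm} further lets me assume, without loss of generality, that universal branches are binary and that the tree-size bound can be enforced on the fly.

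The crucial correctness step will be to verify that $\spn_M$ exactly counts distinct labelings $\varphi(W)$ of well-formed $s$--$t$ walks $W$ of length $n$. Reading the emits of an accepting computation tree in $\prec$-order yields precisely the labeling of the underlying walk; conversely, because the matching of openers and closers in any well-formed bracketed word is unique, two walks sharing a labeling are forced to induce isomorphic $\mathsf{Out}$ tuples even though the ATrM guesses both the splitting index $x$ and the closing edge $e'$ non-deterministically---these guesses are in fact pinned down by the unique parse. I expect this bijection between $\mathsf{Out}$-equivalence classes and distinct labelings to be the main obstacle, as it requires reconciling the tree-shaped output semantics of Section~\ref{sec:atm} with the flat label sequences of Definition~\ref{def:wfw}; in particular, one must check that the free parameter $x$ never creates spuriously distinct outputs for walks with identical labelings, and that walks with distinct labelings never collapse to the same $\mathsf{Out}$ tuple (which follows since the multiset of leaf labels of the output trees already differs).
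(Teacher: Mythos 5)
Your proposal follows essentially the same route as the paper: interpret Algorithm~\ref{alg:wfwalks} as a log-space, polynomial-tree-size ATrM whose span counts the distinct labelings of well-formed $s$-$t$ walks, invoke Theorem~\ref{thm:qpras} to get an FPRAS for each exact length, and sum the $n+1$ estimates as in Corollary~\ref{lem:cfgfpras}. Two of your remarks actually sharpen the paper's argument rather than merely reproduce it: emitting the matched closer $\varphi(e')$ is not cosmetic but necessary, since neither universal branch of Algorithm~\ref{alg:wfwalks} traverses $e'$, so without that emit two walks differing only in the closing label (possible because $oc$ may pair one opener with several closers) would collapse to the same output; and your explicit verification that $\mathsf{Out}$-equivalence coincides with labeling-equivalence addresses a step the paper's induction (which only establishes a bijection between accepting computation trees and walks) leaves implicit---though injectivity should be argued from the ordered structure of the output tuple, which recovers the label sequence, rather than from the multiset of leaf labels, which cannot distinguish permuted label sequences.
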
%
\begin{proof}[Proof Sketch.]
    Algorithm~\ref{alg:wfwalks} correctly characterizes well-formed $s$-$t$ walks, as defined in Definition~\ref{def:wfw}.
    We can show this by induction, where the base case is captured by Line 2. For the inductive step, 
    Line 3 guesses a neighboring edge $e$ where $\varphi(e)$ is opening or not closing.
    Line 7 captures the first inductive case where $\varphi(e)$ is neither opening nor closing, which inductively reduces to walk length $n{-}1$.
    Lines 9--12 cover the second inductive case (via universal branching, see also Figure~\ref{fig:dyck}), where $\varphi(e)$ is opening and we guess
    corresponding closing edge $e'$, which in Line 12 inductively reduces to two universal branches: 
    1) the expression of guessed size $x$ within the opening and closing labels,
    and 2) the remaining expression of size $n{-}x{-}2$ after the closing label.

    We sum up walks of length $0\leq i\leq n$ (see Corollary~\ref{lem:cfgfpras}) and 
    obtain an FPRAS by Thm.~\ref{thm:qpras}.
\end{proof}

\section{Conclusion and Outlook}
In this work, we advanced the understanding of when counting problems admit an FPRAS. We introduced a new machine model that extends the counting class spanL by adding universal (alternating) states to its underlying Turing machine model. This yields the new class \ourclass, which is strictly contained in \#P, yet every problem in \ourclass admits an FPRAS. Moreover, \ourclass has a natural complete problem: counting the number of accepted words of a context-free language up to a given length.
We demonstrated the usefulness of \ourclass by proving that certain families of graph query languages that require balanced labelings admit an FPRAS. We thereby push the boundary of demonstrating the existence of an FPRAS via simple algorithms via advanced counting machine models.
Our results so far are primarily of complexity-theoretic nature. The ATM-based characterization serves as a structural explanation for why these problems admit efficient approximation, rather than a ready-to-use FPRAS algorithm. An important direction for future work is to understand how to systematically extract practical approximation algorithms from \ourclass-machines, and to identify additional natural counting problems whose efficient approximation can be explained—and potentially improved—via this framework.
On the more complexity-theoretic side, several questions remain open. As emphasized in Conjecture~\ref{conj:spanlweaker}, we conjecture that \ourclass is strictly separated from spanL. Establishing such a separation is nontrivial, as it is linked to a longstanding open problem in counting complexity. 
Finally, the larger point of finding a computational characterization of precisely those problems that permit an FPRAS remains open.
Recall that Jerrum et al. showed in a landmark result that there is an FPRAS for the permanent of a matrix with non-negative values ($\mathsf{Per_{\ge 0}}$)~\cite{JerrumEtAl04}. This also forms the basis for a celebrated FPRAS result for bipartite perfect matching. While hard to formalize, there is a sense of these types of FPRAS results being fundamentally different from the flavor~of~FPRAS results derived by language theory (\#NFA, \#CFG, and corresponding TMs). Indeed, it is not believed that counting bipartite perfect matchings is in \spn{}L and we do not expect it to be in \ourclass either. While we believe these types of FPRAS are separate, we are not aware of any problem admitting an FPRAS that is in neither of the two classes. We pose this as an open problem to the~community.

\begin{openquestion}
Is there a problem that admits an FPRAS, but is not in $\ourclass \cup [\mathsf{Per_{\ge 0}}]^{\mathsf{aff}\text{-}P}$?
\end{openquestion}

\clearpage
\bibliographystyle{abbrv}
\bibliography{cite}

\appendix

\section{Additional Proof Details}

\subsection{Proof of Proposition \ref{prop:normal.atrm}}

Fix a witness ATrM $M$ for $P$, so $P(x)=\mathrm{span}_M(x;S,Z)$ for all $x$.
We build $M'$ from $M$ in two steps.

\paragraph{Step 1: make universal branching binary.}
Consider any universal configuration $C$ of $M$ (i.e., its control state is universal).
Let its set of successors be $\delta(C)=\{(C_1,\sigma_1),\ldots,(C_k,\sigma_k)\}$, ordered
according to the fixed child order $\prec$.
We transform this local fan-out as follows.

\begin{itemize}
\item If $k=0$, we relabel the corresponding state as \textsf{accept}.
This preserves acceptance semantics since a universal node with no children is vacuously accepting.
\item If $k=1$, we relabel the corresponding state as existential with the unique successor.
\item If $k\ge 2$, we replace $C$ by a a (linear length) sequence of binary branches:
we introduce fresh universal control states encoding an index $i\in\{1,\ldots,k-1\}$ and construct
configurations $C^{(i)}$ such that $C^{(1)}$ simulates $C$ and, for each $i<k-1$,
$C^{(i)}$ has exactly two successors:
one successor is $(C_i,\sigma_i)$ and the other is $C^{(i+1)}$ reached with output symbol $\circ$.
Finally, $C^{(k-1)}$ branches to $(C_{k-1},\sigma_{k-1})$ and $(C_k,\sigma_k)$.
\end{itemize}

Call the resulting machine $M_{\mathrm{bin}}$.
By construction, every universal configuration now has exactly two children.
Moreover, the transformation preserves outputs under  $\mathrm{Out}(\cdot)$:
all auxiliary edges introduced in the cascade emit $\circ$, hence the inserted auxiliary nodes are
$\circ$-labeled and therefore transparent under Algorithm~1 (they only splice together the output
forests of their children), see also \Cref{fig:binarization-output}. Thus, contracting each cascade yields a bijection between accepting
computation trees of $M$ and accepting computation trees of $M_{\mathrm{bin}}$, preserving
$\mathrm{Out}(T)$ up to isomorphism.

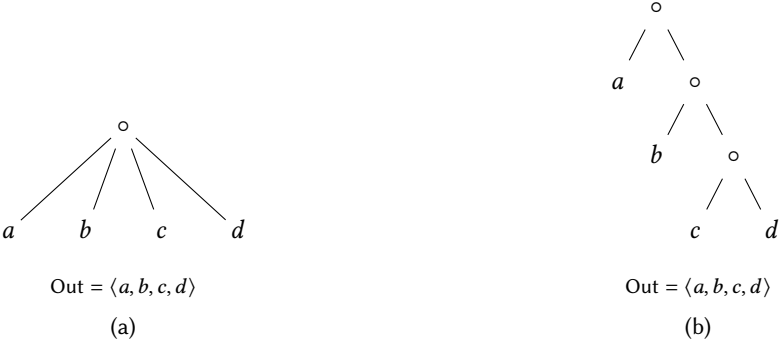
\begin{figure}[t]
\centering
\begin{subfigure}[t]{0.45\linewidth}
\centering
\begin{forest}
for tree={rounded corners, s sep=7mm, l sep=8mm, inner sep=2pt, align=center}
[$\circ$
  [$a$]
  [$b$]
  [$c$]
  [$d$]
]
\end{forest}

\vspace{0.3em}
{\footnotesize ${\sf Out}=\langle a,b,c,d\rangle$}
\caption{}
\end{subfigure}
\hfill
\begin{subfigure}[t]{0.45\linewidth}
\centering
\begin{forest}
for tree={rounded corners, s sep=7mm, l sep=4mm, inner sep=2pt, align=center}
[$\circ$
  [$a$]
  [$\circ$
    [$b$]
    [$\circ$
      [$c$]
      [$d$]
    ]
  ]
]
\end{forest}
\vspace{0.3em}

{\footnotesize ${\sf Out}=\langle a,b,c,d\rangle$}
\caption{}
\end{subfigure}
\caption{Binarizing a universal node preserves the output under the $\circ$-transparent semantics.}
\label{fig:binarization-output}
\end{figure}

Since $M$ is fixed, the maximum $k$ is a constant depending only on $M$, and therefore each
universal node is replaced by at most a constant number of auxiliary nodes. Hence there exists a
polynomial $Z_{\mathrm{bin}}$ with $Z_{\mathrm{bin}}(n)=O(Z(n))$ such that every accepting tree of size
$\le Z(n)$ in $M$ corresponds to an accepting tree of size $\le Z_{\mathrm{bin}}(n)$ in $M_{\mathrm{bin}}$,
and vice versa (after contraction). The work-tape space bound remains $O(S(n))$.
Consequently,
\[
\mathrm{span}_{M_{\mathrm{bin}}}(x;S, Z_{\mathrm{bin}})
\;=\;
\mathrm{span}_{M}(x;S,Z)
\quad\text{for all $x$.}
\]

\paragraph{Step 2: eliminate accepting computation trees beyond the size bound.}
The issue is that $M_{\mathrm{bin}}$ might also have accepting computation trees larger than
$Z_{\mathrm{bin}}(|x|)$, which are simply ignored by the definition of span with a finite size bound.
We show that we can always modify the machine so that such trees cannot be accepting at all.

Let $B=Z_{\mathrm{bin}}(|x|)$.
Since $Z\in\mathrm{poly}(B)$ is polynomially bounded in $|x|$, and thus can be represented in
$O(\log|x|)$ bits.
The machine $M'$ starts by computing $B$ on a separate track (or hardwiring the polynomial and
computing it from $|x|$), and then existentially guessing a budget $b\in\{1,\ldots,B\}$.
Intuitively, $b$ is the number of \emph{principal nodes} (i.e., simulated configurations of
$M_{\mathrm{bin}}$) allowed in the computation tree.

We then simulate $M_{\mathrm{bin}}$ while maintaining an integer counter (in binary) on the work tape:
each time we enter a principal simulated configuration, we decrement the counter by $1$; auxiliary
bookkeeping configurations do not increase the counter and are arranged so that they cannot loop and do not use universal states.
Formally, for each configuration $C$ of $M_{\mathrm{bin}}$ and counter value $c\ge 0$, $M'$ has a
simulation mode representing the pair $(C,c)$. Whenever $M'$ is about to simulate $C$ with counter
$c$, it first checks $c>0$; otherwise it rejects. It then decrements $c\leftarrow c-1$ and proceeds.

If $C$ is \textsf{accept} in $M_{\mathrm{bin}}$, then $M'$ accepts iff $c=0$ after the decrement,
and rejects otherwise.
 If $C$ is universal with exactly two successors $(C_1,\sigma_1)$ and $(C_2,\sigma_2)$,
then $M'$ first enters an auxiliary existential configuration that guesses nonnegative integers
$c_1,c_2$ with $c_1+c_2=c$, stores them, and then enters an auxiliary universal configuration that
branches to simulation modes $(C_1,c_1)$ and $(C_2,c_2)$, emitting $\sigma_1,\sigma_2$ respectively.
All transitions among auxiliary configurations emit $\circ$.

\smallskip
\noindent\emph{Correctness (span preservation).}
Fix an input $x$.

(\emph{$\subseteq$}) Let $T'$ be an accepting computation tree of $M'$ on $x$.
Project $T'$ to a tree $T$ by removing all auxiliary nodes (and forgetting the counters).
By construction, each principal node of $T'$ corresponds to exactly one node of $T$, and the
acceptance-by-budget condition (“accept iff the counter is $0$ after decrement”) implies that the
initial guessed budget equals the number of nodes of $T$.
In particular, $|T|\le B=Z_{\mathrm{bin}}(|x|)$.
Moreover, since auxiliary transitions emit $\circ$, Algorithm~1 ensures that
$\mathrm{Out}(T') \cong \mathrm{Out}(T)$.
Hence every output realized by $M'$ is realized by $M_{\mathrm{bin}}$ by some accepting tree of size
at most $Z_{\mathrm{bin}}(|x|)$.

(\emph{$\supseteq$}) Conversely, let $T$ be any accepting computation tree of $M_{\mathrm{bin}}$ on $x$
with $m:=|T|\le Z_{\mathrm{bin}}(|x|)=B$.
Consider the unique assignment of budgets to nodes of $T$ given by subtree sizes: assign to each
node the size of its subtree in $T$.
Using these values, $M'$ can guess the initial budget $b=m$ and, at each universal node, guess the
corresponding split into the two child subtree sizes.
This produces an accepting tree $T'$ of $M'$ whose projection is exactly $T$, and again
$\mathrm{Out}(T')\cong \mathrm{Out}(T)$ since all introduced nodes are $\circ$-labeled.
Therefore $M'$ realizes exactly the same output objects as $M_{\mathrm{bin}}$ within the size bound.

Combining both directions yields
\[
\mathrm{span}_{M'}(x;S',\infty) \;=\; \mathrm{span}_{M_{\mathrm{bin}}}(x;S,Z_{\mathrm{bin}})
\;=\; \mathrm{span}_{M}(x;S,Z).
\]

\smallskip
\noindent\emph{Bounding tree size and space for $M'$.}
By construction, $M'$ never accepts unless the principal simulated tree has at most $B$ nodes.
Between principal simulations it performs only bounded bookkeeping (arithmetic on $O(\log B)$-bit
counters, copying values, and moving between fixed auxiliary states), which cannot loop and thus
adds at most a polynomial factor overhead. Hence there exists $Z'(n)\in\poly(n)$ such that
every accepting computation tree of $M'$ on inputs of length $n$ has size at most $Z'(n)$, implying
$\spn_{M'}(x;S',Z')=\spn_{M'}(x;S',\infty)$.

Finally, $M'$ uses the work tape of $M_{\mathrm{bin}}$ plus $O(\log B)=O(\log n)$ extra bits to store
the counters, hence $S'(n)\in O(S(n)+\log n)$.
All universal configurations of $M'$ are binary by construction.

\subsection{Proof of Theorem~\ref{thm:cfg.eq.atm}}

\medskip
\noindent{\bf 1. $\boldsymbol{[\spn\alt(\log,\poly)]^{\mathsf{aff}\text{-}P}
=
[\#\text{CFG}]^{\mathsf{aff}\text{-}P}.}$}

\smallskip\noindent
\emph{($\subseteq$).}
Given $f\in \spn\alt(\log,\poly)$ and input $x$, Lemma~\ref{lem:atrm2cfg}
again yields a CFG $G$ with
\[
    f(x)=|\mathcal{L}_{\le t}(G)|,\qquad \|G\|\in\poly(|x|).
\]
Unlike the previous case, we do not require logspace preprocessing; the class
$[\cdot]^{\mathsf{aff}\text{-}P}$ allows polynomial-time affine 
transformations, and therefore converting $G$ into CNF or leaving it as an
arbitrary CFG are both permitted.  Hence, any $\spn\alt(\log,\poly)$ instance
can be mapped in polynomial time to a $\#\text{CFG}$ instance, giving the inclusion
$[\spn\alt(\log,\poly)]^{\mathsf{aff}\text{-}P} \subseteq [\#\text{CFG}]^{\mathsf{aff}\text{-}P}$.

\smallskip\noindent
\emph{($\supseteq$).}
Every CFG $G$ can be converted to $G_{\mathrm{CNF}}$ in polynomial time, so
\[
    [\#\text{CFG}]^{\mathsf{aff}\text{-}P}=[\#\text{CNFG}]^{\mathsf{aff}\text{-}P}.
\]
By Lemma~\ref{lem:cnfg2alt} we have $\#\text{CNFG} \le_{\mathsf{pars}\text{-}L}
\spn\alt(\log,\poly)$, hence in particular
$[\#\text{CNFG}]^{\mathsf{aff}\text{-}P}\subseteq [\spn\alt(\log,\poly)]^{\mathsf{aff}\text{-}P}$.
Combining these gives the desired equality.

\medskip
\noindent{\bf 2. $\boldsymbol{[\#\alt(\log,\poly)]^{\mathsf{aff}\text{-}P}
=
[\#\text{UCFG}]^{\mathsf{aff}\text{-}P}.}$}

\smallskip\noindent
\emph{($\supseteq$).}
Corollary~\ref{cor:ucfg} states a parsimonious polynomial-time reduction
\[
    \#\text{UCFG} \le_{\mathsf{pars}\text{-}P} \#\alt(\log,\poly).
\]
Hence, every $\#\text{UCFG}$ instance can be mapped in polynomial time to an
instance of $\#\alt(\log,\poly)$ while preserving the exact count; this gives
\[
    [\#\text{UCFG}]^{\mathsf{aff}\text{-}P}\subseteq[\#\alt(\log,\poly)]^{\mathsf{aff}\text{-}P}.
\]

\smallskip\noindent
\emph{($\subseteq$).}
Conversely, let $h\in\#\alt(\log,\poly)$ be realized by an alternating
logspace machine $M$ whose accepting computations we wish to count on input
$x$.  Apply Lemma~\ref{lem:atrm2cfg} to $M,x$ and the relevant bounds to obtain
a CFG $G$ such that
\[
    \#\text{acc}_M(x)=|\mathcal{L}_{\le t}(G)|.
\]
Moreover, the construction of Lemma~\ref{lem:atrm2cfg} %
is such that distinct accepting computations of $M$ give rise to distinct derivation trees
of $G$; thus the produced grammar $G$ is unambiguous for $h\in\#\alt(\log,\poly)$, i.e.\ $G\in\text{UCFG}$.
(The computational choices made along each accepting
computation is encoded into the derivation, preventing two different computations from
yielding the same derivation.)  Converting $G$ to CNF, if desired, is a
polynomial-time operation, so the whole mapping is an $\mathsf{aff}\text{-}P$
reduction to $\#\text{UCFG}$.  Therefore,
\[
    [\#\alt(\log,\poly)]^{\mathsf{aff}\text{-}P}\subseteq[\#\text{UCFG}]^{\mathsf{aff}\text{-}P}.
\]

\subsection{Proof of Theorem~\ref{thm:wfw}}
It suffices to show that Algorithm~\ref{alg:wfwalks} correctly characterizes \emph{exactly} the well-formed $s$--$t$ walks of length $n$, as in Definition~\ref{def:wfw}.
Indeed, assuming this correctness, we reduced the counting problem to the setting of Theorem~\ref{thm:qpras}, so we can just individually sum
over all lengths $\le n$, which works as in the proof of Corollary~\ref{lem:cfgfpras}.

\paragraph{Correctness of Algorithm~\ref{alg:wfwalks}.}
For each integer $m$ with $0 \le m \le n$, let $W_m$ denote the set of well-formed $s$--$t$ walks of
\emph{exact} size $m$ (Definition~\ref{def:wfw}).
We prove by induction on $m$ that:

\begin{quote}
\emph{Algorithm~\ref{alg:wfwalks} accepts precisely the walks in $W_m$ when started with budget $m$.}
\end{quote}

\medskip
\noindent\textbf{Base case $m=0$.}
Line~2 of Algorithm~\ref{alg:wfwalks} checks the start--target condition required by Definition~\ref{def:wfw} for an empty walk.
Thus Algorithm~\ref{alg:wfwalks} accepts exactly $W_0$.

\medskip
\noindent\textbf{Inductive hypothesis (IH).}
Fix $m>0$ and assume Algorithm~\ref{alg:wfwalks} accepts exactly $W_{m'}$ for all $m'<m$.

\medskip
\noindent\textbf{Inductive step.}
Consider a walk of size $m$.
When the algorithm is in some configuration $(v,m)$, Line~3 nondeterministically chooses an
outgoing edge $e=(v,u)$.
Write $\varphi(e)$ for its label.  
Exactly one of the following three cases holds:

\smallskip
\emph{Case 1: $\varphi(e)$ is neither opening nor closing.}
Algorithm~\ref{alg:wfwalks} then moves to configuration $(u,m-1)$ in Line~7.
By the IH, the recursive call accepts precisely the well-formed suffix walks of size $m-1$.
Hence this branch accepts exactly the walks whose first edge is $e$ and whose remaining part is
well-formed---i.e.\ exactly the walks in $W_m$ whose first edge is $e$.

\smallskip
\emph{Case 2: $\varphi(e)$ is a closing label.}
By Definition~\ref{def:wfw} a closing label cannot appear unless it matches a strictly earlier opening label.
Since the first step of the walk cannot be a matched closing, Algorithm~\ref{alg:wfwalks} rejects such branches.
This matches the fact that no walk in $W_m$ begins with a closing label.

\smallskip
\emph{Case 3: $\varphi(e)$ is an opening label.}
By well-formedness, every walk in $W_m$ that begins with $e$ has a \emph{unique} matching closing
edge $e'$ later in the walk.
This is the usual uniqueness of a matching bracket: the walk between $e$ and $e'$ contains the same number
of opening and closing labels of the same type, and $e'$ is the first such closing.
Let $x$ be the number of edges strictly between $e$ and $e'$.
Then any such walk $w$ decomposes \emph{uniquely} as:
\[
w = \varphi(e) \;\cdot\; w_{\mathsf{int}} \;\cdot\; \varphi(e') \;\cdot\; w_{\mathsf{suf}},
\]
where $|w_{\mathsf{int}}|=x$, $|w_{\mathsf{suf}}|=m-x-2$, and both $w_{\mathsf{int}}$ and $w_{\mathsf{suf}}$
are well-formed walks (Definition~\ref{def:wfw}).

Algorithm~\ref{alg:wfwalks} explicitly guesses such a matching $e'$ (Lines~9--12) and then
\emph{universally} branches into two recursive calls:
one on size $x$ for the interior, and one on size $m-x-2$ for the suffix.
By the IH, these two recursive calls accept exactly when $w_{\mathsf{int}}$ and $w_{\mathsf{suf}}$
are well-formed.
Universal branching enforces that both subcalls accept, hence this branch accepts
exactly the walks having first edge $e$ and matching closing edge $e'$ with well-formed
interior and suffix.
Because the matching closing edge $e'$ is unique, each such walk corresponds to
\emph{exactly one} accepting computation path of the algorithm, avoiding double counting.

\medskip

\noindent
Since the three cases are exhaustive and mutually exclusive, Algorithm~\ref{alg:wfwalks} accepts exactly $W_m$.
This completes the induction.

For showing the bijection, we still need to show completeness, i.e., we prove the following claim.

\begin{claim}
\label{lem:bijection-walks}
For every $m$, accepting computation paths of Algorithm~\ref{alg:wfwalks} with budget $m$ are in bijection
with well-formed walks in $W_m$.
\end{claim}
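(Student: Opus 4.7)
The plan is to prove the claim by strong induction on the budget $m$, constructing a pair of mutually inverse maps $\Phi_m$ and $\Psi_m$ between accepting computation paths at budget $m$ and walks in $W_m$. The acceptance-set equality established in the correctness argument above already tells us that the two sides are nonempty together, so what remains is to exhibit an explicit pairing and check that it is well-defined on both sides.

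The forward map $\Phi_m$ reads the walk off the path: traversing top-down, each invocation of Line~3 contributes its guessed edge $e$ as the next edge of the walk, and the recursive calls contribute, via the inductive hypothesis, the sub-walk $\Phi_{m{-}1}$ of the Line~7 call in Case~1, or the concatenation $\Phi_x$-then-$e'$-then-$\Phi_{m-x-2}$ of the Line~12 calls in Case~3. The reverse map $\Psi_m$ is a deterministic replay of the algorithm on $w=e_1 \cdots e_m \in W_m$: Line~3 is forced to guess $e_1$; the case split is determined by whether $\varphi(e_1)$ is missing, non-bracket, or opening; in Case~3, Lines~10--11 are forced to select the unique matching closing edge $e'$ and the corresponding interior length $x$; and the recursion then proceeds on the sub-walks via $\Psi_{m{-}1}$, or $\Psi_x$ and $\Psi_{m-x-2}$, as appropriate. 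The base case $m=0$ matches the empty walk with the sole accepting path at $s=t$.

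The main obstacle is the uniqueness of the matching closing edge $e'$ in Case~3: without it, $\Psi_m$ is ill-defined and $\Phi_m$ can fail to be injective. I would lift this uniqueness directly from the inductive correctness argument preceding the claim, which already singles out a canonical matching by exploiting the outermost grammar step $S \to \psi S \psi'$ from Definition~\ref{def:wfw} together with the depth-balance of brackets along $w$; any other candidate closing would force either the interior or the suffix out of $W_{x}$ or $W_{m-x-2}$, contradicting well-formedness. Once uniqueness is in hand, a routine case-by-case verification at each recursive level shows that $\Phi_m \circ \Psi_m = \mathrm{id}$ on $W_m$ and $\Psi_m \circ \Phi_m = \mathrm{id}$ on accepting paths at budget $m$, using the inductive hypothesis for the strictly smaller sub-walks; combining all levels closes the induction and establishes the claimed bijection.
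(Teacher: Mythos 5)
Your plan is correct and follows essentially the same route as the paper's own proof: soundness is inherited from the preceding correctness induction, and completeness comes from deterministically replaying the algorithm on a walk, with the first edge forcing the guess in Line~3 and the unique matching closing bracket forcing the choices of $e'$ and $x$ in Case~3. Making the two maps $\Phi_m$ and $\Psi_m$ explicit is a slightly more detailed packaging of the same argument, with no substantive difference.
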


Soundness follows from the correctness above.
For completeness, given any walk in $W_m$, follow its unique decomposition at the first edge:
if non-bracket, as in Case~1 the next configuration is determined uniquely;
if opening, the matching $e'$ and the integer $x$ are uniquely determined by the standard bracket-matching rule.
Thus each walk induces a unique sequence of nondeterministic and universal choices in Algorithm~\ref{alg:wfwalks},
yielding a unique accepting path.

\section{On the Transducing Variant of $\mathsf{NAuxPDA}(\log, \poly)$}
\label{app:nauxpda}
\newcommand{\NAuxPDA}{\ensuremath{\mathsf{NAuxPDA}}\xspace}
We briefly spell out the transducer version of $\NAuxPDA(\log,\poly)$ alluded to
in \Cref{sec:atm} and justify the claimed reduction from $\spn\alt(\log,\poly)$ to this
model.

A \emph{transducing auxiliary pushdown automaton} (or
\emph{NAuxPDA-transducer}) is a nondeterministic AuxPDA as in \Cref{sec:logcfl}, equipped with a finite \emph{output alphabet}
$\Sigma_{\mathrm{out}}$ and a write-only, write-once output tape.  As for
ATrMs, transitions are labelled by symbols from
$\Sigma_{\mathrm{out}}^{\circ} := \Sigma_{\mathrm{out}} \cup \{\circ\}$,
where $\circ$ again denotes the “no-output’’ symbol.  Formally, if $\delta$
is the transition relation of the underlying AuxPDA, we assume that each
transition additionally carries a component
$\omega \in \Sigma_{\mathrm{out}}^{\circ}$, which is appended to the
output tape whenever the transition is taken (and ignored if
$\omega = \circ$).

Each accepting computation path $\rho$ of an \NAuxPDA $M$ on $x$ is assigned its content on the output tape as its \emph{output word}
${\sf Out}(\rho)$.
For functions $S,Z : \mathbb{N} \to \mathbb{N}$ we let
$ACC_M(x;S,Z)$ be the set of all accepting computations $\rho$ of $M$ on
input $x$ that use at most $S(|x|)$ work-tape space and at most $Z(|x|)$
steps.  We then define the \emph{span} of $M$ by
\[
  \spn_M(x;S,Z)
    \;:=\;
    \bigl|\{{\sf Out}(\rho) \mid \rho \in ACC_M(x;S,Z)\}\bigr|,
\]
i.e., the number of distinct outputs realized by accepting computations
within the given resource bounds.

\begin{definition}[The class $\spn\NAuxPDA(S,Z)$]
  For $S,Z : \mathbb{N} \to \mathbb{N}$, the class
  $\spn\NAuxPDA(S,Z)$ consists of all functions
  $f : \Sigma^{*} \to \mathbb{N}$ for which there exists an
  NAuxPDA-transducer $M$ such that
  \[
    f(x) \;=\; \spn_M(x;S,Z)
    \quad\text{for all } x \in \Sigma^{*}.
  \]
  In particular, $\spn\NAuxPDA(\log,\poly)$ is the counting analogue of
  $\NAuxPDA(\log,\poly)$.
\end{definition}

We next relate $\spn\alt(\log,\poly)$ to $\spn\NAuxPDA(\log,\poly)$.

\begin{lemma}\sloppy
\label{lem:naux.to.atrm}
  For every $f \in \spn\alt(\log,\poly)$ there exists an
  NAuxPDA-transducer $N$ with
  \[
    f(x)
      \;=\; \spn_N(x;\log,\poly)
    \quad\text{for all } x \in \Sigma^{*}.
  \]
  In particular, there exists a parsimonious log-space reduction from
  $\spn\alt(\log,\poly)$ to $\spn\NAuxPDA(\log,\poly)$.
\end{lemma}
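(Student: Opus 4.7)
My starting point is \Cref{prop:normal.atrm}: given $f \in \spn\alt(\log, \poly)$, I may fix a normal-form ATrM $M$ with binary universal branching and with every accepting computation tree of size at most some fixed polynomial $Z(n)$. I then build an NAuxPDA-transducer $N$ that, on input $x$, nondeterministically traverses an accepting computation tree $T$ of $M$ in depth-first, $\prec$-respecting order, using its pushdown stack exactly as in the classical $\ALT(\log,\poly)=\NAuxPDA(\log,\poly)$ simulation: at an existential configuration $N$ guesses the successor; at a (binary) universal configuration with children $c_1 \prec c_2$, $N$ pushes a frame recording $c_2$ together with a "return" marker and descends into $c_1$; upon reaching a leaf, $N$ pops, and either descends into the pending sibling or, once the stack is empty and the root has been closed, accepts. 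Because each frame stores a single configuration of $M$ (encodable in $O(\log n)$ bits) plus a constant-size phase tag, the stack is well-behaved even though its total height may be polynomial; recall that the stack does not count toward the work-tape budget of an AuxPDA.

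The crux is making $N$'s output tape contain a canonical linearization of $\mathsf{Out}(T)$, so that distinct outputs of $M$ correspond bijectively to distinct strings written by $N$. To this end, as $N$ enters a simulated configuration $C$ with transition label $\sigma(C) \neq \circ$, it emits $\langle \sigma(C)$ onto the write-once tape; as it finishes processing the subtree rooted at $C$ (i.e., when popping the corresponding frame), it emits a matching $\rangle$. Configurations with $\sigma(C)=\circ$ contribute nothing, which exactly mirrors the $\circ$-transparency in Algorithm~\ref{alg:outT}. A standard induction on the tree shows that the emitted string is a bracketed depth-first encoding of the ordered labeled forest $\mathsf{Out}(T)$, and that such encodings are in bijection with ordered labeled forests. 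Hence two accepting trees produce the same $N$-output iff they have equal (non-isomorphic) $\mathsf{Out}$-values, so $\spn_N(x;\log,\poly) = \spn_M(x;\log,\poly) = f(x)$. The claimed parsimonious log-space reduction is then just the identity: every $f\in\spn\alt(\log,\poly)$ is directly realized inside $\spn\NAuxPDA(\log,\poly)$.

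The resource bounds fall out cleanly: $N$'s work tape stores only the current configuration of $M$ plus $O(\log n)$ scratch for bracket emission, which is $O(\log n)$ bits; the total time is polynomial, since each of the at most $Z(n)$ nodes of $T$ is touched a constant number of times and each touch is polynomial work. The step I expect to be the main obstacle is not the simulation itself, but ensuring the output discipline is truly canonical in the presence of $\circ$-transitions and of the auxiliary $\circ$-emitting frames introduced by \Cref{prop:normal.atrm}: I need to argue that these "transparent" frames never emit brackets and that the DFS walks the $\prec$-ordered children in exactly the order Algorithm~\ref{alg:outT} assembles them, so that no two non-isomorphic $\mathsf{Out}$-values collapse to the same string and no single $\mathsf{Out}$-value splits across multiple strings. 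Once this bookkeeping is verified, parsimony and the log-space/polynomial-time bounds follow directly.
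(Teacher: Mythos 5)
Your proposal matches the paper's proof essentially step for step: a depth-first NAuxPDA simulation of the normal-form ATrM obtained from \Cref{prop:normal.atrm}, with the pushdown holding the pending universal sibling configurations and the output tape receiving a bracketed linearization of $\mathsf{Out}(T)$, yielding a bijection between accepting computation trees and accepting runs that preserves the span. If anything, your closing-bracket discipline (emit $\rangle$ exactly when the frame of the node that emitted $\langle\sigma$ is popped) is spelled out more carefully than the paper's sketch, which writes $\rangle$ only at accepting leaves; both realize the same encoding in the spirit of \Cref{lem:atrm2cfg}.
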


\begin{proof}[Proof sketch]
  Let $M$ be an ATrM witnessing $f \in \spn\alt(\log,\poly)$, so
  $f(x) = \spn_M(x;S,Z)$ for some $S(n) = O(\log n)$ and $Z \in \poly(|x|)$.  By \Cref{prop:normal.atrm} we may assume w.l.o.g. that
  \begin{itemize}
    \item every universal configuration of $M$ has binary fan-out, and
    \item on input $x$, all accepting computation trees of $M$ have size
      at most $\poly(|x|)$ while $M$ uses only $O(\log |x|)$ work-tape
      space.
  \end{itemize}
We now describe an NAuxPDA-transducer $N$ that simulates $M$ in the
  standard depth-first manner.  Configurations of $M$ are encoded in the
  configuration of $N$ (control state, work tape, input head).
Existential states are simulated directly via non-determinism in $N$.
When a universal state is reached, $N$ places the current configuration on the stack and proceeds with the configuration of the first successor state (according to the $\prec$ order of the ATrM). Once an accepting state is reached, $N$ proceeds the simulation from the last configuration on the stack. When the simulation reaches a rejecting state of $M$, $N$ rejects also.
    
  Since $M$ is log-space and the depth of the computation tree is at
  most polynomial in $|x|$, the work tape of $N$ still uses
  $O(\log |x|)$ space and the overall running time is polynomial.

  We next explain how $N$ produces output.  First, whenever a transition
  of $M$ writes an output symbol $\sigma$, the corresponding transition
  of $N$  writes $\langle\sigma$ to its own output tape.  When the simulation reaches an accepting state of $M$, the machine $N$ write $\rangle$ to the output tape if the output tape of $N$ is non-empty.
Thus analogous to \Cref{lem:atrm2cfg} we encode the output of \Cref{alg:outT} as a string.

It is then immediate that each accepting computation tree $T$ of $M$ corresponds to exactly one accepting run $\rho_T$ of $N$ and vice versa. Moreover, ${\sf Out}(\rho_T)$ precisely encodes ${\sf Out}(T)$. Thus
  $\spn_N(x;\log,\poly) = \spn_M(x;S,Z) = f(x)$ for all $x$. The
  construction of $N$ from $M$ in logarithmic space is standard.
\end{proof}

\newcommand{\LOGCFL}{\ensuremath{\mathsf{LOGCFL}}\xspace}
\paragraph{On the Counting Limitations of the NAuxPDA--ATM Correspondence.}
The classical characterization
\[
\LOGCFL \;=\; \alt(\log,\poly) \;=\; \NAuxPDA(\log,\poly)
\]
relies on a simulation argument of Ruzzo~\cite{ruzzo1979tree} (itself based on a related reduction by Cook~\cite{cook1971characterizations}) that translates the behavior of an
auxiliary pushdown automaton into alternating computation.  Since this correspondence is central
to the conceptual placement of $\spn\alt(\log,\poly)$ in our framework, let us briefly recall the
structure of the simulation in the \emph{decision} setting and highlight why it cannot be adapted
to preserve counting information.

At a high level, the NAuxPDA to ATM reduction replaces concrete pushdown computations with a
\emph{surface-configuration} view: a configuration remembers the control state, work tape, and the
topmost stack symbol, but abstracts away the remainder of the pushdown stack.  The simulation
is then phrased in terms of a realizability predicate: a pair of surface configurations $(P,Q)$ is
realizable if there exists a valid NAuxPDA computation that starts in some full configuration
extending $P$ and ends in one extending $Q$, with properly matched push--pop behavior.  The ATM
decides this predicate recursively.  Its central step asserts that, whenever $P\neq Q$, then
$(P,Q)$ is realizable iff there exist surface configurations $R,S,T$ such that:
(i) $(P,R)$ is realizable,
(ii) $R \to S$ is a push transition,
(iii) $(S,T)$ is realizable,
and (iv) $T \to Q$ is a matching pop transition.
The ATM existentially guesses $(R,S,T)$ and universally verifies the subinstances.  Acceptance of
the input follows from realizability of the pair $(P_{\text{init}},Q_{\text{acc}})$.  

Crucially, this
argument merely witnesses \emph{existence} of some consistent run of the NAuxPDA. It does not
reference individual pushdown stacks or distinguish among different accepting runs.
From the standpoint of decision complexity this abstraction is innocuous: realizability is
well-defined, and the decomposition ensures that a polynomial-size alternating proof tree exists
precisely when the NAuxPDA accepts.  However, the same abstraction renders the reduction
fundamentally unsuitable for counting semantics.  Most notably, many distinct accepting
runs of the NAuxPDA may induce the \emph{same} surface path $(C_0,\ldots,C_\ell)$, as the
entire stack content below the top symbol is absent from the surface representation.  Hence, the
realizability predicate already collapses potentially exponentially many NAuxPDA runs into a
single surface witness.

One might attempt to refine the simulation to reduce nondeterminism in the decomposition.
For example, the ATM could existentially guess the length~$\ell$ of the realizing surface path
and, at every recursive call, universally branch only on the midpoint~$\lfloor \ell/2\rfloor$, thereby
enforcing a canonical subdivision of the underlying surface path.  This indeed eliminates the
proliferation of different proof trees \emph{for a fixed surface path} as each path admits at most one
consistent midpoint decomposition.  Nevertheless, the more fundamental collapse described above
persists: even a perfectly canonical decomposition can yield at most one accepting alternating
proof tree per surface path, while the NAuxPDA may have many accepting computations that all
project to the same $(C_0,\ldots,C_\ell)$.  Since these runs cannot be distinguished without
reinstating the full pushdown evolution, such a  alternation-based realizability proof can recover their
multiplicity.

There also exists a direct reduction from decision problems in $\NAuxPDA(\log,\poly)$ to plain PDAs (and hence CFG membership) \cite{sudborough1978tape}. Curiously, this reduction -- which is technically entirely different from the one described above, relying instead on reduction of two-way PDAs by one-way PDAs -- is also not counting-preserving.

Given \Cref{lem:naux.to.atrm} it thus seems that $\spn\NAuxPDA(\log,\poly)$ is a more powerful counting class than $\spn\alt(\log,\poly)$. Resolving whether a counting preserving reduction exists after all, or whether these classes are in fact separated is an interesting direction for future work. In particular, it is also open whether every problem in $\spn\NAuxPDA(\log,\poly)$  admits an FPRAS. 

 \end{document}